\definecolor[named]{pawcolor}{cmyk}{0.80,0,1,0.19} 
\definecolor[named]{gbcolor}{cmyk}{1,0.1,0,0.2} 
\definecolor[named]{asidecolor}{gray}{0.7} 
\newcommand{\IH}{\mathbb{IH}}
\newcommand \LinRel[1]{{\mathsf{LinRel}}_{\scriptscriptstyle #1}}
\newcommand{\AffRel}[1]{\mathsf{AffRel}_{\scriptstyle #1}}
\newcommand{\myeq}[1]{\mathrel{\overset{\makebox[0pt]{\mbox{\normalfont\scriptsize\sffamily (#1)}}}{=}}}
\newcommand{\semElec}[1]{\mathcal{I}\left(#1\right)}
\newcommand{\semEq}{\mathrel{\overset{\mathcal{I}}{=}}}
\newcommand{\semLeq}{\mathrel{\overset{\mathcal{I}}{\leq}}}
\newcommand{\Defeq}{\mathrel{\overset{\text{def}}{=}}}
\newcommand{\sort}[2]{\ensuremath{(#1,\,#2)}}
\newcommand{\typ}{\mathrel{:}}
\newcommand{\bluebullet}{\color{blue}\bullet\color{black}}
\newcommand{\from}{\mathrel{:}}
\theoremstyle{definition}
\newtheorem{theorem}{Theorem}
\newtheorem{lemma}[theorem]{Lemma}
\newtheorem{proposition}[theorem]{Proposition}
\newtheorem{observation}[theorem]{Observation}
\newtheorem{corollary}[theorem]{Corollary}
\tikzset{baseline=-0.5ex}
\definecolor{light-gray}{gray}{.7}
\tikzstyle{none}=[inner sep=0pt]
\tikzstyle{plain}=[inner sep=0pt]
\tikzstyle{black}=[circle, draw=black, fill=black, inner sep=0pt, minimum size=3.5pt]
\tikzstyle{black-faded}=[circle, draw=light-gray, fill=light-gray, inner sep=0pt, minimum size=4pt]
\tikzstyle{white}=[circle, draw=black, fill=white, inner sep=0pt, minimum size=3.5pt]
\tikzstyle{white-faded}=[circle, draw=light-gray, fill=white, inner sep=0pt, minimum size=4.5pt]
\tikzstyle{delay}=[fill=black, regular polygon, regular polygon sides=3,rotate=-90, scale=.55]
\tikzstyle{delay-op}=[fill=black, regular polygon, regular polygon sides=3,rotate=90, scale=.55]
\tikzstyle{reg}=[draw, fill=white, rounded rectangle, rounded rectangle left arc=none, minimum height=1em, minimum width=1em, node font={\scriptsize}]
\tikzstyle{coreg}=[draw, fill=white, rounded rectangle, rounded rectangle right arc=none, minimum height=1em, minimum width=1em, node font={\scriptsize}]
\tikzstyle{rn}=[circle, draw=red, fill=red, inner sep=0pt, minimum size=4pt]
\tikzstyle{place}=[circle, draw=black, fill=white, inner sep=0pt, minimum size=8pt]
\tikzstyle{medium box}=[fill=white, draw=black, shape=rectangle, minimum height=1cm, minimum width=0.75cm]
\tikzstyle{small box}=[fill=white, draw=black, shape=rectangle, minimum height=0.75cm, minimum width=0.5cm]
\tikzstyle{square}=[fill=white, draw=black, shape=rectangle, minimum height=1cm, minimum width=1cm]
\tikzstyle{triangle}=[fill=black, draw=black, shape=regular polygon, regular polygon sides=3, rotate=270, scale=0.6]
\tikzstyle{antipode}=[fill=black, draw=black, shape=rectangle]
\tikzstyle{triangleop}=[fill=black, draw=black, shape=regular polygon, regular polygon sides=3, rotate=90, scale=0.6]
\tikzstyle{label}=[fill=none, draw=none, shape=circle]
\tikzstyle{transition}=[fill=white, draw=black, shape=rectangle, minimum height=0.75cm, minimum width=0.75cm]
\tikzstyle{transition}=[rectangle,thick,draw=black!75,fill=black!20,minimum size=7pt]
\tikzstyle{place}=[circle, draw=black, fill=white, inner sep=0pt, minimum size=8pt]
\tikzstyle{arrow}=[->]
\newcommand{\Wunit}{
\tikzset{x=1em, y=2.1ex}
\begin{tikzpicture}[baseline=-.5ex, scale=.7]
	\begin{pgfonlayer}{nodelayer}
		\node [style=white] (0) at (0.25, 0) {};
		\node [style=none] (1) at (1.75, 0) {};
	\end{pgfonlayer}
	\begin{pgfonlayer}{edgelayer}
		\draw (0) to (1.center);
	\end{pgfonlayer}
\end{tikzpicture}
}
\tikzset{x=1em, y=1.5ex}
\newcommand{\Bcounit}{
\tikzset{x=1em, y=2.1ex}
\begin{tikzpicture}[baseline=-.5ex, scale=.7]
	\begin{pgfonlayer}{nodelayer}
		\node [style=black] (0) at (1.5, 0) {};
		\node [style=none] (1) at (0, 0) {};
	\end{pgfonlayer}
	\begin{pgfonlayer}{edgelayer}
		\draw (0) to (1.center);
	\end{pgfonlayer}
\end{tikzpicture}
}
\tikzset{x=1em, y=1.5ex}
\newcommand{\Bcomult}{
\tikzset{x=1em, y=2.1ex}
\begin{tikzpicture}
	\begin{pgfonlayer}{nodelayer}
		\node [style=black] (0) at (0, 0) {};
		\node [style=none] (1) at (-1, 0) {};
		\node [style=none] (2) at (1, -0.5) {};
		\node [style=none] (3) at (1, 0.5) {};
	\end{pgfonlayer}
	\begin{pgfonlayer}{edgelayer}
		\draw (0) to (1.center);
		\draw [in=-60, out=180] (2.center) to (0);
		\draw [in=60, out=-180] (3.center) to (0);
	\end{pgfonlayer}
\end{tikzpicture}
}
\tikzset{x=1em, y=1.5ex}
\newcommand{\Wmult}{
\tikzset{x=1em, y=2.1ex}
\begin{tikzpicture}
	\begin{pgfonlayer}{nodelayer}
		\node [style=white] (0) at (0, 0) {};
		\node [style=none] (1) at (1, 0) {};
		\node [style=none] (2) at (-1, -0.5) {};
		\node [style=none] (3) at (-1, 0.5) {};
	\end{pgfonlayer}
	\begin{pgfonlayer}{edgelayer}
		\draw (0) to (1.center);
		\draw [in=-120, out=0] (2.center) to (0);
		\draw [in=120, out=0] (3.center) to (0);
	\end{pgfonlayer}
\end{tikzpicture}
}
\tikzset{x=1em, y=1.5ex}
\newcommand{\scalar}{
\tikzset{x=1em, y=2.1ex}
\begin{tikzpicture}
	\begin{pgfonlayer}{nodelayer}
		\node [style=reg] (0) at (-0.25, 0) {$k$};
		\node [style=none] (1) at (1, 0) {};
		\node [style=none] (2) at (-1.5, 0) {};
	\end{pgfonlayer}
	\begin{pgfonlayer}{edgelayer}
		\draw (2.center) to (0);
		\draw (0) to (1.center);
	\end{pgfonlayer}
\end{tikzpicture}
}
\tikzset{x=1em, y=1.5ex}
\newcommand{\circuitX}{
\tikzset{x=1em, y=2.1ex}
\begin{tikzpicture}
	\begin{pgfonlayer}{nodelayer}
		\node [style=reg] (0) at (-0.25, -0) {$x$};
		\node [style=none] (1) at (1, -0) {};
		\node [style=none] (2) at (-1.5, -0) {};
	\end{pgfonlayer}
	\begin{pgfonlayer}{edgelayer}
		\draw (2.center) to (0);
		\draw (0) to (1.center);
	\end{pgfonlayer}
\end{tikzpicture}}
\tikzset{x=1em, y=1.5ex}
\newcommand{\Bunit}{
\tikzset{x=1em, y=2.1ex}
\begin{tikzpicture}[baseline=-.5ex, scale=.7]
	\begin{pgfonlayer}{nodelayer}
		\node [style=black] (0) at (0.25, 0) {};
		\node [style=none] (1) at (1.75, 0) {};
	\end{pgfonlayer}
	\begin{pgfonlayer}{edgelayer}
		\draw (0) to (1.center);
	\end{pgfonlayer}
\end{tikzpicture}
}
\tikzset{x=1em, y=1.5ex}
\newcommand{\Wcounit}{
\tikzset{x=1em, y=2.1ex}
\begin{tikzpicture}
	\begin{pgfonlayer}{nodelayer}
		\node [style=white] (0) at (1.5, 0) {};
		\node [style=none] (1) at (0.5, 0) {};
	\end{pgfonlayer}
	\begin{pgfonlayer}{edgelayer}
		\draw (0) to (1.center);
	\end{pgfonlayer}
\end{tikzpicture}
}
\tikzset{x=1em, y=1.5ex}
\newcommand{\Wcomult}{
\tikzset{x=1em, y=2.1ex}
\begin{tikzpicture}
	\begin{pgfonlayer}{nodelayer}
		\node [style=white] (0) at (0, 0) {};
		\node [style=none] (1) at (-1, 0) {};
		\node [style=none] (2) at (1, -0.5) {};
		\node [style=none] (3) at (1, 0.5) {};
	\end{pgfonlayer}
	\begin{pgfonlayer}{edgelayer}
		\draw (0) to (1.center);
		\draw [in=-60, out=180] (2.center) to (0);
		\draw [in=60, out=180] (3.center) to (0);
	\end{pgfonlayer}
\end{tikzpicture}
}
\tikzset{x=1em, y=1.5ex}
\newcommand{\Bmult}{
\tikzset{x=1em, y=2.1ex}
\begin{tikzpicture}
	\begin{pgfonlayer}{nodelayer}
		\node [style=black] (0) at (0, 0) {};
		\node [style=none] (1) at (1, 0) {};
		\node [style=none] (2) at (-1, -0.5) {};
		\node [style=none] (3) at (-1, 0.5) {};
	\end{pgfonlayer}
	\begin{pgfonlayer}{edgelayer}
		\draw (0) to (1.center);
		\draw [in=-120, out=0] (2.center) to (0);
		\draw [in=120, out=0] (3.center) to (0);
	\end{pgfonlayer}
\end{tikzpicture}
}
\tikzset{x=1em, y=1.5ex}
\newcommand{\scalarop}{
\tikzset{x=1em, y=2.1ex}
\begin{tikzpicture}
	\begin{pgfonlayer}{nodelayer}
		\node [style=coreg] (0) at (-0.25, 0) {$k$};
		\node [style=none] (1) at (1, 0) {};
		\node [style=none] (2) at (-1.5, 0) {};
	\end{pgfonlayer}
	\begin{pgfonlayer}{edgelayer}
		\draw (2.center) to (0);
		\draw (0) to (1.center);
	\end{pgfonlayer}
\end{tikzpicture}
}
\tikzset{x=1em, y=1.5ex}
\newcommand{\circuitXop}{
\tikzset{x=1em, y=2.1ex}
\begin{tikzpicture}
	\begin{pgfonlayer}{nodelayer}
		\node [style=coreg] (0) at (-0.25, 0) {$x$};
		\node [style=none] (1) at (1, 0) {};
		\node [style=none] (2) at (-1.5, 0) {};
	\end{pgfonlayer}
	\begin{pgfonlayer}{edgelayer}
		\draw (2.center) to (0);
		\draw (0) to (1.center);
	\end{pgfonlayer}
\end{tikzpicture}
}
\tikzset{x=1em, y=1.5ex}
\newcommand{\One}{
\tikzset{x=1em, y=2.1ex}
\begin{tikzpicture}[baseline=-.5ex]
	\begin{pgfonlayer}{nodelayer}
		\node [style=none] (0) at (-0.75, -0) {};
		\node [style=none] (1) at (0.25, -0) {};
		\node [style=none] (2) at (-0.75, 0.25) {};
		\node [style=none] (3) at (-0.75, -0.25) {};
	\end{pgfonlayer}
	\begin{pgfonlayer}{edgelayer}
		\draw (0.center) to (1.center);
		\draw (3.center) to (2.center);
	\end{pgfonlayer}
\end{tikzpicture}}
\tikzset{x=1em, y=1.5ex}
\newcommand{\R}{\mathbb{R}}
\newcommand{\antipode}{\lower0pt\hbox{$\includegraphics[height=.2cm]{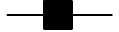}$}}
\newcommand{\impedanceBox}{\lower10pt\hbox{$\includegraphics[height=.8cm]{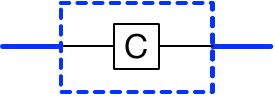}$}}
\newcommand{\wire}{\lower0pt\hbox{$\includegraphics[height=.07cm]{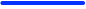}$}}
\newcommand{\opencircuit}{\lower0pt\hbox{$\includegraphics[height=.12cm]{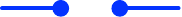}$}}
\newcommand{\resistor}{\lower5pt\hbox{$\includegraphics[height=.8cm]{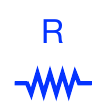}$}}
\newcommand{\vsource}{\lower5pt\hbox{$\includegraphics[height=.8cm]{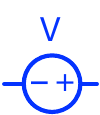}$}}
\newcommand{\csource}{\lower5pt\hbox{$\includegraphics[height=.8cm]{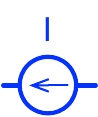}$}}
\newcommand{\inductor}{\lower5pt\hbox{$\includegraphics[height=.8cm]{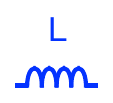}$}}
\newcommand{\capacitor}{\lower5pt\hbox{$\includegraphics[height=.8cm]{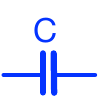}$}}
\newcommand{\junction}{\lower10pt\hbox{$\includegraphics[height=.8cm]{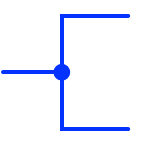}$}}
\newcommand{\circcounit}{\includegraphics[height=.12cm]{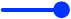}}
\newcommand{\cojunction}{\lower10pt\hbox{$\includegraphics[height=.8cm]{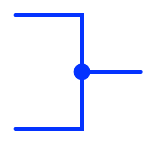}$}}
\newcommand{\ampmeter}{\lower5.5pt\hbox{$\includegraphics[height=.7cm]{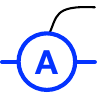}$}}
\newcommand{\voltmeter}{\lower5.5pt\hbox{$\includegraphics[height=.7cm]{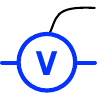}$}}
\newcommand{\cvsource}{\lower5.5pt\hbox{$\includegraphics[height=.7cm]{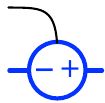}$}}
\newcommand{\ccsource}{\lower5.5pt\hbox{$\includegraphics[height=.7cm]{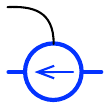}$}}
\newcommand{\circunit}{\includegraphics[height=.12cm]{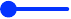}}
\newcommand{\circprop}{\mathsf{ECirc}}
\newcommand{\ecircprop}{\mathsf{EECirc}}
\title{String Diagrammatic Electrical Circuit Theory}
\author{Guillaume Boisseau
\institute{University of Oxford, UK\thanks{Funded by the EPSRC}}
\and
Pawe{\l} Soboci\'nski
\institute{Tallinn University of Technology, Estonia\thanks{Supported by the ESF funded Estonian
IT Academy research measure (project 2014-2020.4.05.19-0001) and the Estonian Research Council grant PRG1210.}}
}
\begin{document}
\maketitle

\begin{abstract}
    We develop a comprehensive string diagrammatic treatment of electrical circuits.
    Building on previous, limited case studies, we introduce controlled sources and meters as elements,
    and the \emph{impedance calculus}, a powerful toolbox for diagrammatic reasoning on circuit
    diagrams. We demonstrate the power of our approach by giving idiomatic proofs of several
    textbook results, including the superposition theorem and Th\'evenin's theorem.
\end{abstract}

\section{Introduction}

Classical electrical circuit theory~\cite{desoerBasicCircuitTheory1969}, and the ubiquitous use of electrical circuits in the daily life of an engineer, are testament to their power as a compositional, diagrammatic abstraction, denoting actual physical systems while at the same time intuitively reflecting network interconnections. Like other formalisms (e.g.\ signal flow graphs), theorems about circuits are proved by translating them---sometimes not entirely faithfully---to more traditional mathematics, with linear algebra playing a particularly important role, and compositionality taking a back seat. Nevertheless, \emph{diagrammatic reasoning} at the level of circuits is the remit of several classical theorems, e.g.\ Th\'evenin's theorem or Norton's theorem.

We build on recent work~\cite{baezCompositionalFrameworkPassive2015,coyaCircuitsBondGraphs2018,bonchiGraphicalAffineAlgebra2019}
that treats circuits compositionally. In particular, we use Coya's semantics~\cite{coyaCircuitsBondGraphs2018}
that captures non-passive elements such as sources, which was subsequently axiomatised in the string-diagrammatic setting of Graphical Affine Algebra (GAA)~\cite{bonchiGraphicalAffineAlgebra2019}, also building on the categorical approach to relational algebra developed by Carboni and Walters~\cite{carboniCartesianBicategories1987}.

This is our starting point. Staying with affine relations as our ambient denotational universe, we enlarge the set of basic electrical elements with \emph{controlled sources} and \emph{meters}: voltmeters and ammeters. This lets us study both open and \emph{closed} circuits; the latter have been ignored in existing compositional approaches since a closed circuit has only two possible denotations as an affine relation.

Our second innovation is what we call the \emph{impedance calculus}. GAA and circuit syntax are intermingled via the use of \emph{impedance boxes}: circuit elements that hold within them a GAA diagram. They reveal the symmetries of classical circuit elements more clearly than ordinary GAA diagrams, but they are more than just of aesthetic value. Indeed, they allow us to solve another common problem of string diagrammatic approaches: having to deal with large diagrams. Using the algebra of impedance boxes allows us to ``compile'' parts of the circuit to GAA, simplify, and translate back into circuit syntax, dramatically reducing the complexity of diagrammatic reasoning when solving typical textbook circuit problems.

Armed with our richer set of circuits and the impedance calculus, we are able to prove most of the well-known textbook (e.g.\ \cite{desoerBasicCircuitTheory1969}) theorems of elementary circuit theory. For space reasons we have not been able to include many; nevertheless, we hope that the assortment on offer will give the reader a taste of the power of diagrammatic reasoning with the impedance calculus. 
On the one hand, the diagrammatic syntax allows for rigorous yet elegant statements of results about circuits, often revealing the underlying symmetries---e.g.\ the independent measurement principle (Theorem~\ref{thm:im}) vs the Superposition Theorem (Theorem~\ref{thm:superposition}); the principle of relativity of potentials (Proposition~\ref{prop:relativePotentials}) vs the principle of conservation of currents (Proposition~\ref{prop:conservationCurrents}). On the other, we show that the compositional framework---with its use of diagrammatic reasoning and the algebra of cartesian and abelian categories of relations---leads to elegant and  rigorous proofs.

\section{Graphical Affine Algebra and Electrical Circuits}


We begin with the basics of Graphical Linear Algebra (GLA) with its equational theory of Interacting Hopf Algebras ($\IH$)~\cite{bonchiInteractingHopfAlgebras2017}. Fix a field $\mathsf{k}$.
GLA is a string diagrammatic syntax, which organises itself as the arrows of the free prop $\mathsf{GLA}_{\mathsf{k}}$ over the following monoidal signature:
\begin{align}
    \{\;&\Bcounit,\ \Bcomult,\  \scalar,\ \Wmult,\ \Wunit,\  \label{eq:GLAsyntax1} \\
    &\Bunit,\ \Bmult,\ \scalarop,\ \Wcomult,\ \Wcounit \;\}  \label{eq:GLAsyntax2}
\end{align}
where $k\in\mathsf{k}$ is a \emph{scalar}.
Intuitively the black structure can be thought of as \emph{copying}, the white as \emph{adding}. This is borne out by the intended semantics, which we describe next.

The prop of \emph{linear relations} $\LinRel{\mathsf{k}}$ has as arrows $m\to n$ relations $R\subseteq \mathsf{k}^m\times\mathsf{k}^n$, which are $\mathsf{k}$-vector spaces; i.e.\ closed under $\mathsf{k}$-linear combinations. In other words, arrows $m\to n$ are linear subspaces of $\mathsf{k}^m\times\mathsf{k}^n$ considered as a $\mathsf{k}$-vector space. Composition is standard relational composition, $R\mathrel{;}S = \{\, (u,w) \;|\; \exists v.\; (u,v)\in R \ \wedge\ (v,w)\in S\,\}$.
It is simple to show linear relations are closed under composition.

The semantics of GLA is a prop morphism $[-]_{\mathsf{k}}\mathrel{:}\mathsf{GLA}_\mathsf{k} \rightarrow \LinRel{\mathsf{k}}$.
Since $\mathsf{GLA}_\mathsf{k}$ is free, it is enough to describe its action on the generators.
We do so below for the generators in~\ref{eq:GLAsyntax1};
the corresponding generators in \ref{eq:GLAsyntax2} are sent to the opposite relations. In each case the variables range over $\mathsf{k}$.
\[
\begin{array}{lllll}
    [\Bcomult]_\mathsf{k} = \left\{\,\left(x,\, \scriptstyle {\scriptstyle x \choose \scriptstyle x}\right)
    \,\right\},
    &
    [\Bcounit]_\mathsf{k} = \{\,(x,\, \bullet)
    \,\},
    &
    [\scalar]_\mathsf{k} = \{\,(x,\, kx)
    \,\},
    \\[5pt]
    {[}\Wmult]_\mathsf{k} = \left\{\,\left(\scriptstyle {\scriptstyle x \choose \scriptstyle y},\, x+y\right)
    \,\right\},
    &
    [\Wunit]_\mathsf{k} = \{\,(\bullet,\, 0)\}
\end{array}
\]
The associated theory $\IH$ characterises linear relations. We give a brief overview below:
\begin{itemize}
\item both monoids $(\Bmult,\,\Bunit)$ $(\Wmult,\,\Wunit)$ and comonoids $(\Bcomult,\,\Bcounit)$, $(\Wcomult,\,\Wunit)$ are commutative;
\item monoids and comonoids of the opposite colour satisfy the equations of commutative bialgebras;
\item monoids and comonoids of the same colour satisfy the extra special Frobenius equations;
\item to pass between black and white cups and caps is to compose with $-1$. We shall often draw the $-1$ scalar as $\antipode$.
\item All non-zero scalars are invertible, with the inverse
of $\scalar$ given by  $\scalarop$ for $k\neq 0$.
\end{itemize}
%

Graphical Affine Algebra was introduced in~\cite{bonchiGraphicalAffineAlgebra2019},
extending the expressivity of GLA to affine relations.
Let us recall the main concepts.
A \emph{translation} $v+V$ of a linear subspace $V$ by a vector $v$ is the set $v+V=\{\, v+w \;|\; w\in V\,\}$.
An \emph{affine subspace} $W$ 
is either empty, or it is the translation $v+V$ for some vector $v$ and subspace $V$. Note that the empty set is an affine subspace, but it is \emph{not} a linear subspace.

The prop of affine relations $\AffRel{\mathsf{k}}$ has as
affine subspaces of $\mathsf{k}^m\times\mathsf{k}^n$ as arrows $m\to n$. It is the case that the composition of two affine relations yields an affine relation, and can be shown using the notion of homogenisation~\cite[Proposition~6]{bonchiGraphicalAffineAlgebra2019}, but we will not delve into the details here.

\smallskip
On the syntactic side, we extend the signature~\eqref{eq:GLAsyntax1}, \eqref{eq:GLAsyntax2} with a single generator
\begin{equation}\label{eq:GAAsyntax}
\One
\end{equation}
obtaining a richer string diagrammatic syntax as the arrows of the free prop $\mathsf{GAA}$. Abusing notation, the semantics
$[-]_\mathsf{k} \mathrel{:} \mathsf{GAA} \rightarrow \AffRel{\mathsf{k}}$ is defined the same way as for $\mathsf{GLA}$ on the shared generators, and $[\One]_\mathsf{k} = \{\,(\bullet,\, 1)\}$. To characterise $\AffRel{\mathsf{k}}$ we add the following equations:
\[

\tikzset{x=1em, y=2.1ex}
\begin{tikzpicture}
	\begin{pgfonlayer}{nodelayer}
		\node [style=none] (0) at (-1.25, -0) {};
		\node [style=none] (1) at (1.25, 0.75) {};
		\node [style=black] (2) at (0, -0) {};
		\node [style=none] (3) at (1.25, -0.75) {};
		\node [style=none] (4) at (-1.25, -0.25) {};
		\node [style=none] (5) at (-1.25, 0.25) {};
	\end{pgfonlayer}
	\begin{pgfonlayer}{edgelayer}
		\draw [bend right, looseness=1.00] (1.center) to (2);
		\draw [in=180, out=-60, looseness=1.00] (2) to (3.center);
		\draw (0.center) to (2);
		\draw (5.center) to (4.center);
	\end{pgfonlayer}
\end{tikzpicture}}
\tikzset{x=1em, y=1.5ex}
 \quad\myeq{dup}\quad 
\tikzset{x=1em, y=2.1ex}
\begin{tikzpicture}
	\begin{pgfonlayer}{nodelayer}
		\node [style=none] (0) at (0.75, -0.5) {};
		\node [style=none] (1) at (-0.75, 0.75) {};
		\node [style=none] (2) at (-0.75, -0.25) {};
		\node [style=none] (3) at (-0.75, -0.75) {};
		\node [style=none] (4) at (0.75, 0.5) {};
		\node [style=none] (5) at (-0.75, 0.5) {};
		\node [style=none] (6) at (-0.75, -0.5) {};
		\node [style=none] (7) at (-0.75, 0.25) {};
	\end{pgfonlayer}
	\begin{pgfonlayer}{edgelayer}
		\draw (5.center) to (4.center);
		\draw (1.center) to (7.center);
		\draw (6.center) to (0.center);
		\draw (2.center) to (3.center);
	\end{pgfonlayer}
\end{tikzpicture}}
\tikzset{x=1em, y=1.5ex}
 \qquad

\tikzset{x=1em, y=2.1ex}
\begin{tikzpicture}
	\begin{pgfonlayer}{nodelayer}
		\node [style=none] (0) at (-0.75, 0) {};
		\node [style=black] (1) at (0.5, 0) {};
		\node [style=none] (2) at (-0.75, -0.25) {};
		\node [style=none] (3) at (-0.75, 0.25) {};
	\end{pgfonlayer}
	\begin{pgfonlayer}{edgelayer}
		\draw (0.center) to (1);
		\draw (3.center) to (2.center);
	\end{pgfonlayer}
\end{tikzpicture}
}
\tikzset{x=1em, y=1.5ex}
 \quad\myeq{del}\quad 
\tikzset{x=1em, y=2.1ex}
\begin{tikzpicture}
	\begin{pgfonlayer}{nodelayer}
		\node [style=none] (0) at (-0.5, 0.75) {};
		\node [style=none] (1) at (0.75, 0.75) {};
		\node [style=none] (2) at (-0.5, -0.5) {};
		\node [style=none] (3) at (0.75, -0.5) {};
	\end{pgfonlayer}
	\begin{pgfonlayer}{edgelayer}
		\draw [densely dotted] (0.center) to (1.center);
		\draw [densely dotted] (1.center) to (3.center);
		\draw [densely dotted] (3.center) to (2.center);
		\draw [densely dotted] (2.center) to (0.center);
	\end{pgfonlayer}
\end{tikzpicture}}
\tikzset{x=1em, y=1.5ex}
 \qquad

\tikzset{x=1em, y=2.1ex}
\begin{tikzpicture}
	\begin{pgfonlayer}{nodelayer}
		\node [style=none] (0) at (-0.5, 0.5) {};
		\node [style=white] (1) at (0.5, 0.5) {};
		\node [style=none] (2) at (-0.5, 0.25) {};
		\node [style=none] (3) at (-0.5, 0.75) {};
		\node [style=none] (4) at (-2, -0.5) {};
		\node [style=none] (5) at (1.75, -0.5) {};
	\end{pgfonlayer}
	\begin{pgfonlayer}{edgelayer}
		\draw (0.center) to (1);
		\draw (3.center) to (2.center);
		\draw (4.center) to (5.center);
	\end{pgfonlayer}
\end{tikzpicture}}
\tikzset{x=1em, y=1.5ex}
 \quad\myeq{empty}\quad 
\tikzset{x=1em, y=2.1ex}
\begin{tikzpicture}
	\begin{pgfonlayer}{nodelayer}
		\node [style=none] (0) at (-0.5, 0.5) {};
		\node [style=none] (1) at (2, -0.5) {};
		\node [style=none] (2) at (-2, -0.5) {};
		\node [style=black] (3) at (-0.5, -0.5) {};
		\node [style=white] (4) at (0.5, 0.5) {};
		\node [style=black] (5) at (0.5, -0.5) {};
		\node [style=none] (6) at (-0.5, 0.25) {};
		\node [style=none] (7) at (-0.5, 0.75) {};
	\end{pgfonlayer}
	\begin{pgfonlayer}{edgelayer}
		\draw (0.center) to (4);
		\draw (7.center) to (6.center);
		\draw (2.center) to (3);
		\draw (1.center) to (5);
	\end{pgfonlayer}
\end{tikzpicture}}
\tikzset{x=1em, y=1.5ex}

\]

\medskip
Instead of relying purely on equational reasoning, it is often convenient to work with inequations, which can lead to shorter calculations, and allow us to identify some interesting, higher-level categorical structures. Instead of working with ordinary props, one works with \emph{ordered} props, that is props enriched over the category of posets. Similarly any equational theory can be presented as an \emph{inequational} theory in the obvious way, by replacing an equation by two inequations.

Indeed, $\LinRel{\mathsf{k}}$
and $\AffRel{\mathsf{k}}$ can be considered as ordered props by using set-theoretical inclusion of relations as the homset order. On the syntactic side, it suffices~\cite{bonchiRefinementSignalFlow2017} to add a single inequation:
\[
\Wunit \leq \Bunit
\]
The ordered setting also lends itself to higher level reasoning schema. In particular, $\LinRel{\mathsf{k}}$ is an abelian bicategory of relations and $\AffRel{\mathsf{k}}$ is a cartesian bicategory of relations, concepts developed in~\cite{carboniCartesianBicategories1987}.

\subsection{The prop of electrical circuits and its semantics}\label{prop-of-circuits}

We recall the string diagrammatic development of electrical circuits
from~\cite{bonchiGraphicalAffineAlgebra2019},
with minor modifications to suit our development in subsequent sections.
The prop $\circprop$ is free on the following signature:
\begin{equation}\label{eq:circpropsig}
\left\{\resistor ,\, \vsource ,\, \csource,\, \inductor,\, \capacitor \right\}_{R,L,C\in\R_+,\,V,I\in\R} \cup \quad \left\{\junction,\, \circcounit,\, \cojunction,\, \circunit \right\}
\end{equation}
where the parameters range over the reals. Arrows $m\to n$ of $\circprop$ represent open linear electrical circuits with $m$ open terminals on the left
and $n$ open terminals on the right. Generator $\resistor$ represents a \emph{resistor}, $\vsource$ a \emph{voltage source}, $\csource$ a \emph{current source},
$\inductor$ an \emph{inductor} and
$\capacitor$ a \emph{capacitor}.

Circuits in $\circprop$ are translated to $\mathsf{GAA}$ over $\R(x)$: the field of fractions of polynomials with real coefficients (see~\cite{coyaCircuitsBondGraphs2018,bonchiGraphicalAffineAlgebra2019}).
The semantics is a strict monoidal functor
$\mathcal{I}:\circprop \to \mathsf{GAA}$ where $\mathcal{I}(1)=2$ on objects: the idea is that every electrical wire is represented by \emph{two} $\mathsf{GAA}$ wires, the voltage wire on top and the current wire on the bottom.
We give the semantics in Figure~\ref{fig:circsemantics}, by showing the action on generators. Given circuits $c,d$ of $\circprop$ we write $c \semEq d$ when $\semElec{c}=\semElec{d}$ (in the equational theory of $\mathsf{GAA}$, or equivalently as affine relations in $\AffRel{\R(x)}$). Similarly, we write $c \semLeq d$ when $\semElec{c}\leq\semElec{d}$.

\begin{figure}\label{fig:semelec}
\newcommand{\electricalSemantics}[4]{
    \!\!
    \semElec{#1} = {#2}
    \ \mapsto \
    \left\{\scriptstyle \left({\scriptstyle #3}\right)\,\big|\, {\scriptstyle #4} \right\}
}
\newcommand{\elementSemantics}[3]{
    \electricalSemantics{#1}{#2}{{\phi_1 \choose i},{\phi_2 \choose i}}{#3}
}

\[
\elementSemantics
    {\resistor}
    {\lower10pt\hbox{$\includegraphics[height=1cm]{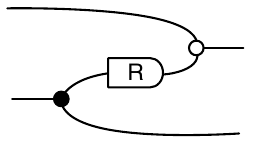}$}}
    {\phi_2-\phi_1 = Ri}
\]
\[
\elementSemantics
    {\vsource}
    {\lower10pt\hbox{$\includegraphics[height=.8cm]{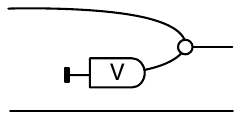}$}}
    {\phi_2-\phi_1=V}
,\
\elementSemantics
    {\csource}
    {\lower10pt\hbox{$\includegraphics[height=.8cm]{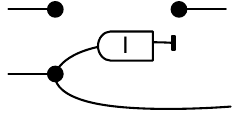}$}}
    {i=I}
\]
\[
\!\!\!\!\!\!\!
\elementSemantics
    {\inductor}
    {\!\!\!\lower10pt\hbox{$\includegraphics[height=.8cm]{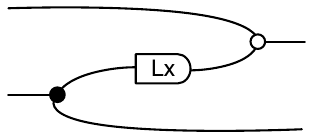}$}}
    {\phi_2-\phi_1 = Lxi}
,\
\elementSemantics
    {\capacitor}
    {\!\!\!\lower10pt\hbox{$\includegraphics[height=.8cm]{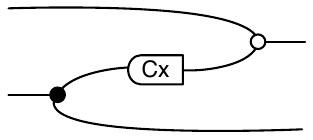}$}}
    {i = Cx(\phi_2-\phi_1)}
\]
\[
\electricalSemantics
    {\junction}
    {\lower10pt\hbox{$\includegraphics[height=.8cm]{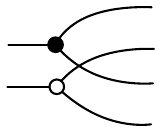}$}}
    {{\phi \choose i_1}, \scriptscriptstyle \left(\begin{smallmatrix} \scriptscriptstyle \phi \\ \scriptscriptstyle i_2 \\ \scriptscriptstyle \phi \\ \scriptscriptstyle i_3 \end{smallmatrix}\right)}
    {i_1 + i_2 + i_3 = 0}
,\
\electricalSemantics
    {\cojunction}
    {\lower10pt\hbox{$\includegraphics[height=.8cm]{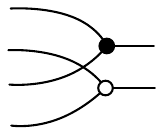}$}}
    {{\scriptscriptstyle \left(\begin{smallmatrix} \scriptscriptstyle \phi \\ \scriptscriptstyle i_2 \\ \scriptscriptstyle \phi \\ \scriptscriptstyle i_3 \end{smallmatrix}\right)}, {\phi \choose i_1}}
    {i_1 + i_2 + i_3 = 0}
\]
\[
\electricalSemantics
    {\circcounit}
    {\lower5pt\hbox{$\includegraphics[height=.5cm]{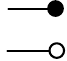}$}}
    {{\phi \choose i}, \bullet}
    {i = 0}
,\quad\
\electricalSemantics
    {\circunit}
    {\lower5pt\hbox{$\includegraphics[height=.5cm]{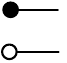}$}}
    {\bullet, {\phi \choose i}}
    {i = 0}
\]
\caption{Compositional semantics of circuits.\label{fig:circsemantics}}
\end{figure}

\section{The Impedance Calculus}\label{sec:imp-calculus}

We now exploit a pattern of the semantics in Figure~\ref{fig:circsemantics} 
 to simplify the passage between circuits and $\mathsf{GAA}$. This results in the \emph{impedance calculus}, which can be used to simplify diagrammatic reasoning on circuits.

We extend the syntax~\eqref{eq:circpropsig} of $\circprop$ with \emph{impedance boxes}---illustrated in~\eqref{eq:impedanceBox}---parametrised with respect to arbitrary $\mathsf{GAA}$ circuits of type $\sort{1}{1}$: that is, with one wire on the left and one on the right. We then extend the semantic mapping $\semElec{\cdot}$ to cover impedance boxes, as below right.
\begin{equation}\label{eq:impedanceBox}
    \impedanceBox,
    \qquad \qquad
    \semElec{\impedanceBox} \quad = \quad \lower10pt\hbox{$\includegraphics[height=1cm]{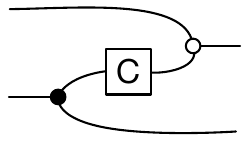}$}
\end{equation}

The following are now easy derivations in the equational theory of GAA:
\[
\begin{array}{rccrccrcc}
    \wire       &\semEq& \lower13pt\hbox{$\includegraphics[height=1cm]{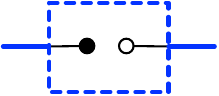}$},
    &\vsource   &\semEq& \lower13pt\hbox{$\includegraphics[height=1cm]{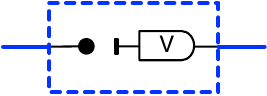}$},
    &\csource   &\semEq& \lower13pt\hbox{$\includegraphics[height=1cm]{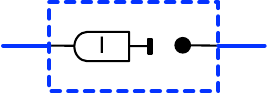}$},
    \\
    \resistor   &\semEq& \lower13pt\hbox{$\includegraphics[height=1cm]{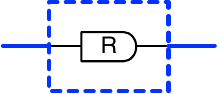}$},
    &\inductor  &\semEq& \lower13pt\hbox{$\includegraphics[height=1cm]{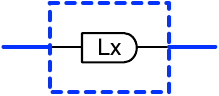}$},
    &\capacitor &\semEq& \lower13pt\hbox{$\includegraphics[height=1cm]{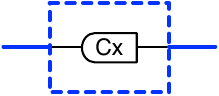}$}.
\end{array}
\]

We now prove results that give the impedance calculus its power and allow us to manipulate impedance
boxes within circuits.
Henceforward we will use the syntactic sugar
$\lower6pt\hbox{$\includegraphics[height=.8cm]{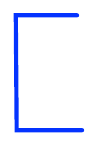}$} \Defeq \lower6pt\hbox{$\includegraphics[height=.8cm]{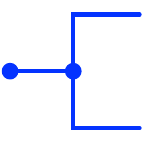}$}$
and
$\lower6pt\hbox{$\includegraphics[height=.8cm]{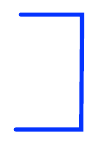}$} \Defeq \lower6pt\hbox{$\includegraphics[height=.8cm]{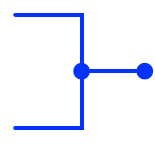}$}$.

\begin{lemma}\label{lem:impedanceProperties}~
\[\begin{array}{rcccrccc}
    \text{(i)}&
    \lower8pt\hbox{$\includegraphics[height=.6cm]{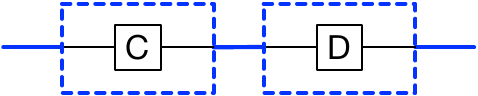}$}
    &\semEq&
    \lower12pt\hbox{$\includegraphics[height=1cm]{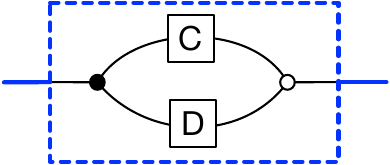}$}
    &
    \text{(ii)}&
    \lower15pt\hbox{$\includegraphics[height=1.3cm]{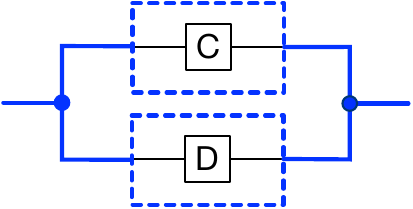}$}
    &\semEq &
    \lower12pt\hbox{$\includegraphics[height=1cm]{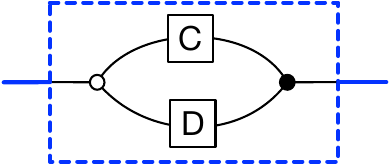}$}
    \\
    \text{(iii)}&
    \lower17pt\hbox{$\includegraphics[height=1.2cm]{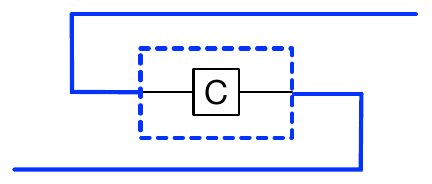}$}
    &\semEq&
    \lower8pt\hbox{$\includegraphics[height=.6cm]{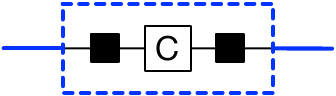}$}
    &
    \text{(iv)}&
    \semElec{\lower8pt\hbox{$\includegraphics[height=.6cm]{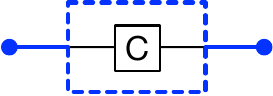}$}}
    &=&
    \lower4pt\hbox{$\includegraphics[height=.4cm]{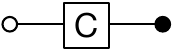}$}
\end{array}\]
\end{lemma}

\begin{proof}
    \[
    \text{(i)}: \lower18pt\hbox{$\includegraphics[height=1.4cm]{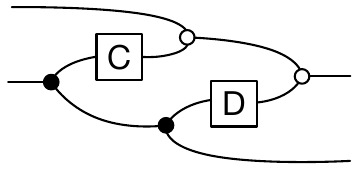}$}
    \quad = \quad  \lower20pt\hbox{$\includegraphics[height=1.6cm]{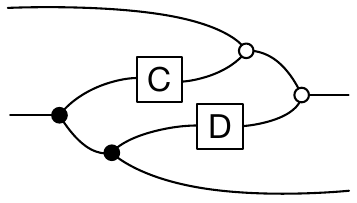}$}
    \quad = \quad  \lower18pt\hbox{$\includegraphics[height=1.4cm]{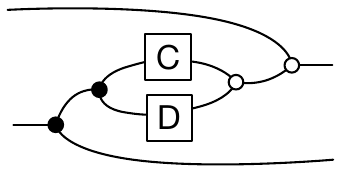}$}
    \]
    \[
    \text{(ii)}: \lower22pt\hbox{$\includegraphics[height=1.8cm]{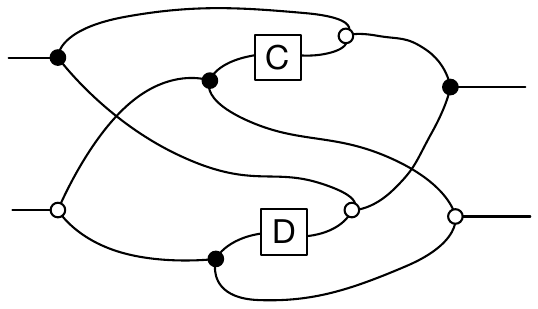}$}
    \quad = \quad \lower22pt\hbox{$\includegraphics[height=1.8cm]{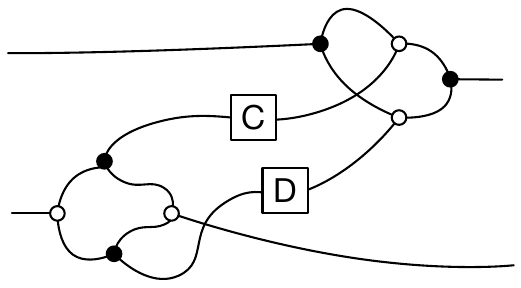}$}
    \quad = \quad \lower22pt\hbox{$\includegraphics[height=1.8cm]{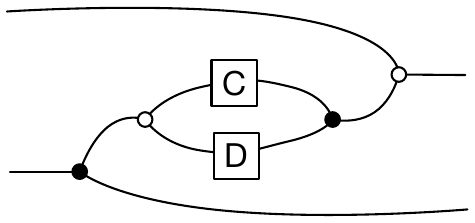}$}
    \]
    \begin{multline*}
    \text{(iii)}: \lower22pt\hbox{$\includegraphics[height=1.8cm]{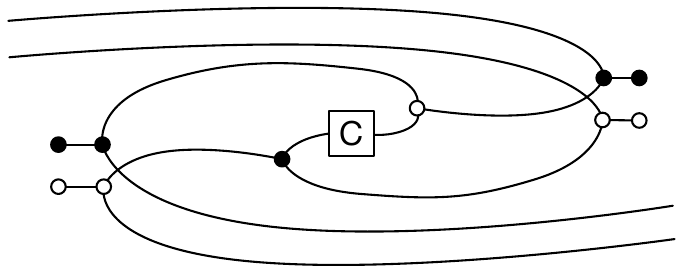}$}
    \quad = \quad  \lower22pt\hbox{$\includegraphics[height=1.8cm]{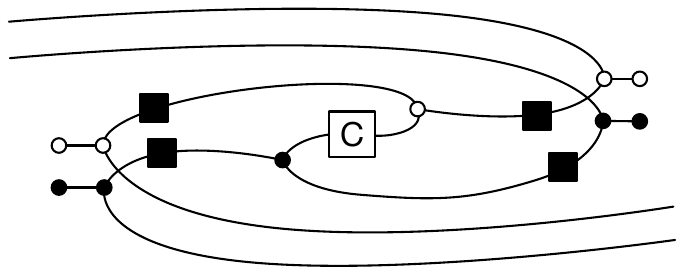}$} \\
     = \quad  \lower22pt\hbox{$\includegraphics[height=1.8cm]{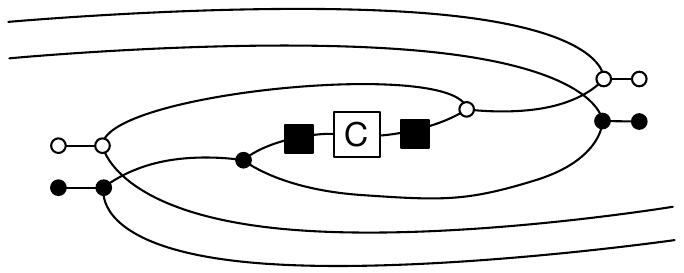}$}
    \quad = \quad \lower13pt\hbox{$\includegraphics[height=1.2cm]{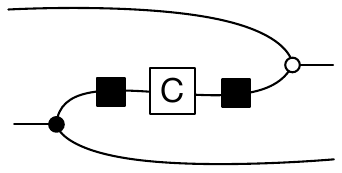}$}
    \end{multline*}
    \[
    \text{(iv)}: \lower13pt\hbox{$\includegraphics[height=1.2cm]{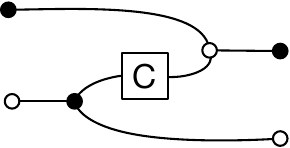}$} \quad=\quad \lower4pt\hbox{$\includegraphics[height=.4cm]{graffles/closedRHS.pdf}$}
    \]
\end{proof}

Using \Cref{lem:impedanceProperties} \text{(iii)} we can immediately derive several useful properties of circuits:
\begin{corollary}~
    \begin{enumerate}[(i)]
    \item Resistors, inductors and capacitors are ``directionless'':
    \[
    \lower15pt\hbox{$\includegraphics[height=1.3cm]{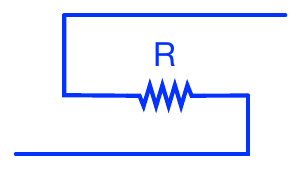}$}
    \semEq
    \resistor,
    \qquad
    \lower15pt\hbox{$\includegraphics[height=1.3cm]{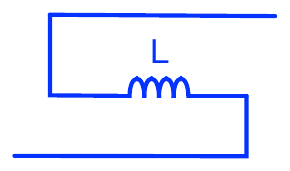}$}
    \semEq
    \inductor,
    \qquad
    \lower15pt\hbox{$\includegraphics[height=1.3cm]{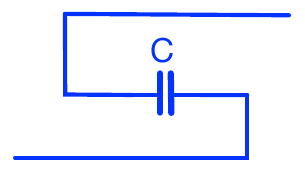}$}
    \semEq
    \capacitor.
    \]
    \item Reversing the direction of voltage and current sources flips polarities:
    \[
    \lower15pt\hbox{$\includegraphics[height=1.3cm]{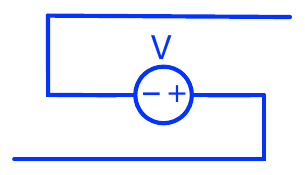}$}
    \semEq\
    \lower5pt\hbox{$\includegraphics[height=.8cm]{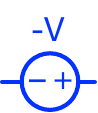}$}
    \ \Defeq\
    \lower5pt\hbox{$\includegraphics[height=.8cm]{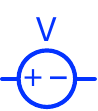}$},
    \qquad\qquad
    \lower15pt\hbox{$\includegraphics[height=1.3cm]{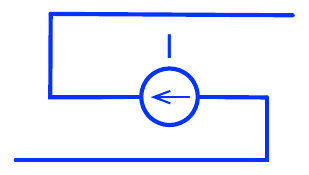}$}
    \semEq\
    \lower5pt\hbox{$\includegraphics[height=.8cm]{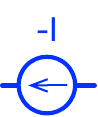}$}
    \ \Defeq\
    \lower5pt\hbox{$\includegraphics[height=.8cm]{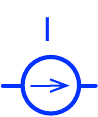}$}.
    \]
    \end{enumerate}
\end{corollary}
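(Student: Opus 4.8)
The plan is to treat both parts with one recipe, pushing everything through \Cref{lem:impedanceProperties}~\textit{(iii)}. First I would rewrite each element as the impedance box to which it was shown equal in the derivations immediately preceding the lemma: $\resistor$ as the box carrying the scalar $R$, $\inductor$ as the box carrying $Lx$, $\capacitor$ as the box carrying $\tfrac{1}{Cx}$, and likewise $\vsource$ and $\csource$ as the voltage and current impedance boxes. The $S$-shaped wrapping appearing on the left-hand sides is exactly the cup/cap construction of the lemma, so after this rewriting each left-hand side is $S$ applied to a single impedance box. \Cref{lem:impedanceProperties}~\textit{(iii)} then tells me how $S$ acts on the enclosed $\mathsf{GAA}$ circuit: by the cup/cap sugar just introduced it pre- and post-composes the content with an antipode $\antipode$. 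The corollary thus reduces to evaluating this conjugation on the six box contents.

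For part~(i) the decisive point is that the resistor, inductor and capacitor contents are \emph{linear}, i.e.\ $\mathsf{GLA}$ scalar multiplications. I would use the $\IH$ laws to slide one antipode through the scalar and cancel it against the other: for any scalar $k$ the relation $\{(x,kx)\}$ is homogeneous, so $k(-x) = -(kx)$ and the two sign flips annihilate, $\antipode\,;\,(k)\,;\,\antipode \semEq (k)$. Hence the conjugated content equals the original content, the box is unchanged, and translating back through the same preceding derivations returns $\resistor$, $\inductor$ and $\capacitor$ respectively.

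For part~(ii) the contents are instead \emph{affine}, built from the generator $\One$ rather than from a bare scalar, and now the two antipodes do not cancel. On the voltage source the current wire is discarded, so the antipode on that side is absorbed by the counit $\Bcounit$ (via the $\IH$ bialgebra/counit laws), while the surviving antipode on the voltage wire negates the constant, sending the $\One$-content $V$ to $-V$; dually, for the current source it is the antipode on the current wire that survives and sends $I$ to $-I$. In each case I would then recognise the negated box as precisely the polarity-flipped source appearing on the right-hand side, which is exactly what the $\Defeq$ in the statement records, completing both equalities.

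The step I expect to carry the real content --- everything else being bookkeeping --- is this asymmetry between the linear and the affine cases: why conjugation by antipodes is trivial on a homogeneous scalar yet leaves a residual sign on the degree-zero constant coming from $\One$. Making this precise means combining the $\IH$ equations for the antipode with the $\mathsf{GAA}$ equations $(\textsf{dup})$ and $(\textsf{del})$ governing $\One$. Once \Cref{lem:impedanceProperties}~\textit{(iii)} has localised the reasoning to a single box content, however, each verification is a short $\IH/\mathsf{GAA}$ calculation, which is why the result follows immediately.
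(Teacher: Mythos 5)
Your proposal is correct and takes essentially the same route as the paper, which presents this corollary as an immediate consequence of \Cref{lem:impedanceProperties}~\textit{(iii)}: rewrite each element as its impedance box, let the $S$-wrapping conjugate the box content by antipodes, and simplify. Your identification of the crux --- that the two antipodes cancel on the homogeneous scalar contents ($R$, $Lx$, $\tfrac{1}{Cx}$) but leave a residual sign on the affine constants built from $\One$, flipping $V$ to $-V$ and $I$ to $-I$ --- is exactly the short $\IH$/$\mathsf{GAA}$ calculation the paper leaves implicit.
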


The impedance calculus is useful for proving
circuit equivalences. The following proposition are just a few examples of classic equivalences
one would find in any textbook.

\begin{proposition}~
\begin{enumerate}[(i)]
\item $\lower5pt\hbox{$\includegraphics[height=.8cm]{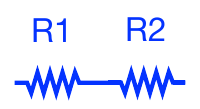}$} \semEq
       \lower5pt\hbox{$\includegraphics[height=.8cm]{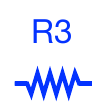}$}$ where
      $\lower3pt\hbox{$\includegraphics[height=.4cm]{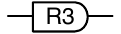}$} =
       \lower9pt\hbox{$\includegraphics[height=.8cm]{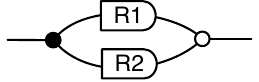}$} =
       \lower3pt\hbox{$\includegraphics[height=.4cm]{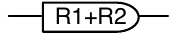}$}$
\item $\lower14pt\hbox{$\includegraphics[height=1.4cm]{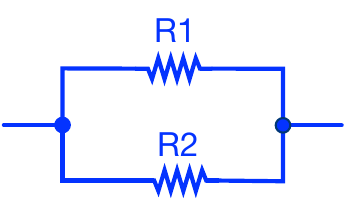}$} \semEq
       \lower5pt\hbox{$\includegraphics[height=.8cm]{graffles/r3.pdf}$}$ where
      $\lower3pt\hbox{$\includegraphics[height=.4cm]{graffles/r3gla.pdf}$} =
       \lower9pt\hbox{$\includegraphics[height=.8cm]{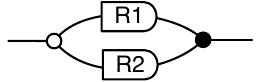}$} =
       \lower2.5pt\hbox{$\includegraphics[height=.3cm]{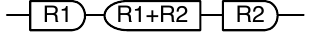}$}$
\item $\lower14pt\hbox{$\includegraphics[height=1.4cm]{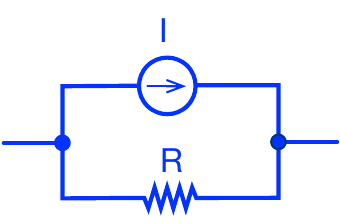}$} \semEq
       \lower5pt\hbox{$\includegraphics[height=.8cm]{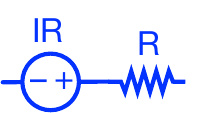}$}$
\item $\lower14pt\hbox{$\includegraphics[height=1.4cm]{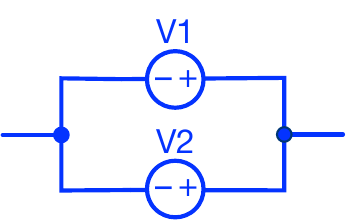}$} \semEq
       \lower5pt\hbox{$\includegraphics[height=.8cm]{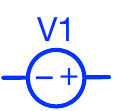}$}$\ \ if $V1=V2$,
        otherwise its semantics is $\emptyset$ (the empty relation)
\end{enumerate}
\end{proposition}

It is useful to contrast our treatment with the classical approach.
Parts
(i) and (iii) are standard and often-used equivalences. Part
(ii) is known classically,
but $R3$ is typically given a formula like $R1 R2/(R1+R2)$.
Classical formulas, however, do not work for all values of $R1$ and $R2$,
whereas the graphical one does.
Given that it mirrors the case
for resistors in series,
we argue that the graphical formula is the more natural one.
Finally, in (iv), the empty case is usually excluded by classical treatments.
A textbook deems such a circuit \textit{degenerate}
and ignores that case when proving theorems.
In $\mathsf{GAA}$ however, the empty relation is a first-class citizen
and our theorems uniformly include the empty case as well.

\begin{proof}
Part (i) is a simple exercise in the use of the impedance calculus:
\[
\lower5pt\hbox{$\includegraphics[height=.8cm]{graffles/r1r2series.pdf}$} \semEq
\lower12pt\hbox{$\includegraphics[height=1cm]{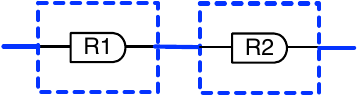}$} \semEq
\lower12pt\hbox{$\includegraphics[height=1cm]{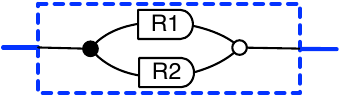}$} \semEq
\lower5pt\hbox{$\includegraphics[height=.8cm]{graffles/r3.pdf}$}.
\]

For part (ii), we have $\lower14pt\hbox{$\includegraphics[height=1.4cm]{graffles/r1r2par.pdf}$} \semEq
\lower20pt\hbox{$\includegraphics[height=1.7cm]{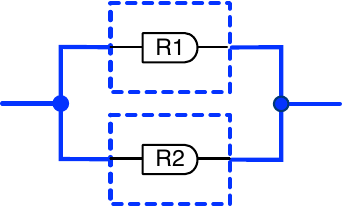}$} \semEq
\lower12pt\hbox{$\includegraphics[height=1cm]{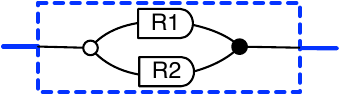}$}.$
Now
\begin{multline*}
\lower9pt\hbox{$\includegraphics[height=.8cm]{graffles/r1r2pargla.pdf}$}
=
\lower9pt\hbox{$\includegraphics[height=.8cm]{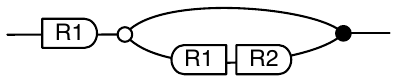}$}
=
\lower9pt\hbox{$\includegraphics[height=.8cm]{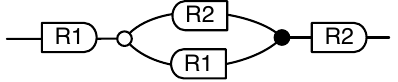}$}
=
\lower2.5pt\hbox{$\includegraphics[height=.3cm]{graffles/thisismyfinalform.pdf}$}
\end{multline*}
extracts the classical formula: because $R1$ and $R2$ are nonnegative,
either $R1+R2\neq 0$ and
$\lower2.5pt\hbox{$\includegraphics[height=.3cm]{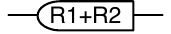}$}$
is a scalar, or $R1=R2=0$ and the formula is equal to $0$.
In both cases the result is a scalar.

Part (iii) is another simple calculation:
\begin{multline*}
\lower12pt\hbox{$\includegraphics[height=1.4cm]{graffles/csourceparresistor.pdf}$}
\semEq \lower17pt\hbox{$\includegraphics[height=1.4cm]{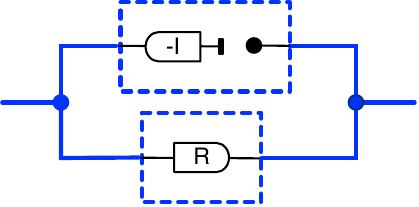}$}
\semEq \lower14pt\hbox{$\includegraphics[height=1.3cm]{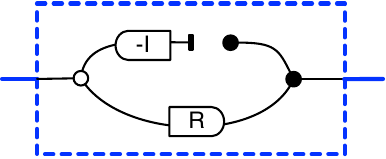}$}
\semEq \lower14pt\hbox{$\includegraphics[height=1.3cm]{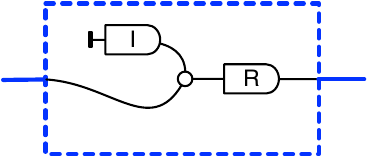}$} \\
\semEq \lower14pt\hbox{$\includegraphics[height=1.3cm]{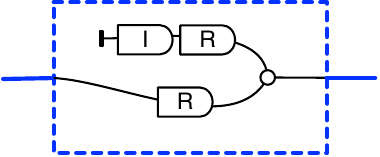}$}
\semEq \lower12pt\hbox{$\includegraphics[height=.9cm]{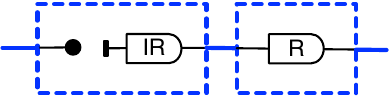}$}
\semEq \lower5pt\hbox{$\includegraphics[height=.8cm]{graffles/vsourceseriesresistor.pdf}$}
\end{multline*}

For part (iv), we can simplify using the impedance calculus as follows:
\[
\lower14pt\hbox{$\includegraphics[height=1.4cm]{graffles/vsourceparvsource.pdf}$}
\semEq \lower18pt\hbox{$\includegraphics[height=1.4cm]{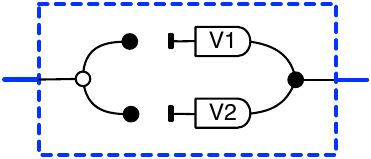}$}
\semEq \lower18pt\hbox{$\includegraphics[height=1.4cm]{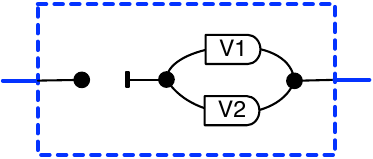}$}
\]
Now $\lower9pt\hbox{$\includegraphics[height=.9cm]{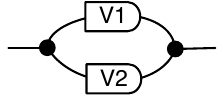}$}$
is just $V1$ if $V1=V2$, and $\lower1pt\hbox{$\includegraphics[height=.2cm]{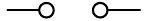}$}$
otherwise. In that case, the circuit evaluates to
$\lower14pt\hbox{$\includegraphics[height=1cm]{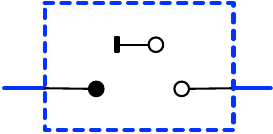}$}$ which denotes the empty relation.
\end{proof}
We shall see that, as a consequence of our Representation Theorem (Theorem~\ref{thm:representation}), \emph{any} $\sort{\bluebullet}{\bluebullet}$ circuit can be represented by an impedance box.

\section{Measuring Closed Circuits}

Thus far, we have kept the language of circuits $\circprop$ and the language of
$\mathsf{GAA}$ neatly separated by impedance boxes~\eqref{eq:impedanceBox} and the prop morphism
$\semElec{\cdot}\mathrel{:}\circprop \to \mathsf{GAA}$. It is time to tear down the wall.

While impedance boxes are useful for open circuit calculations, engineers
 often study closed circuits.
The problem is that these are mapped to $0 \to 0$ affine relations,
and there are only two: the singleton and the empty set.
In other words, the mapping only tells us whether the circuit has a solution,
and nothing else. Indeed, no interesting behavior can be observed in a closed circuit in our existing framework.

Classically, a closed circuit is annotated with names for the currents and voltage differences of interest.
In some presentations voltmeters and ammeters are even added as explicit elements.
This is exactly what we do: add meters to our syntax.
Instead of assigning names, however, our meters have an outgoing ``information'' wire, thought of as an ordinary $\mathsf{GAA}$ wire. Information wires
allow us to probe closed circuits, i.e.\ those with no open \textit{electric} wires,
 and to state and prove several theorems of circuit theory. Our diagrams will now have two types of wires: electric wires and information wires. Instead of keeping the props $\circprop$ and $\mathsf{GAA}$ separate, we unite the two and work with a coloured (multisorted) prop $\ecircprop$.

The coloured prop $\ecircprop$ has as objects words over the alphabet $\{\bluebullet,\,\bullet\}$. It is freely generated
by the union of all the generators \eqref{eq:GLAsyntax1}, \eqref{eq:GLAsyntax2}, \eqref{eq:GAAsyntax}
of $\mathsf{GAA}$, the generators \eqref{eq:circpropsig} of $\circprop$, and the following:
\[
\left\{\,\voltmeter,\,\ampmeter,\,\cvsource,\,\ccsource,\,\lower12pt\hbox{$\includegraphics[height=1.4cm]{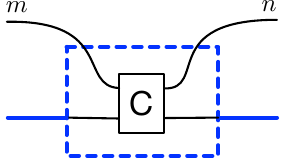}$}\,\right\}_{m,n\in \mathbb{N},\, c\from m+1\rightarrow n+1 \text{ in }\mathsf{GAA}}.
\]
These are a \emph{voltmeter} $\voltmeter\typ\sort{\bluebullet}{\bullet\bluebullet}$, \emph{ammeter} $\ampmeter\typ\sort{\bluebullet}{\bullet\bluebullet}$, \emph{controlled voltage source}
$\cvsource\typ\sort{\bullet\bluebullet}{\bluebullet}$, \emph{controlled current source} $\ccsource\typ\sort{\bullet\bluebullet}{\bluebullet}$ and a \emph{generalised impedance box}
$\sort{\bullet^m\bluebullet}{\bullet^n\bluebullet}$, parametrised over arbitrary $\mathsf{GAA}$ diagrams $c\mathrel{:}m+1\to n+1$.

We abuse notation and denote the translation of $\ecircprop$ to $\mathsf{GAA}$ by $\mathcal{I}\from\ecircprop\rightarrow\mathsf{GAA}$.
On objects,
$\semElec{\bluebullet} = 2$, since---as before---an electrical wire is represented by two wires in $\mathsf{GAA}$,
but the information wire ``is'' a GAA wire, i.e.\ $\semElec{\bullet} = 1$. Next, $\mathcal{I}$ acts on the generators of $\mathsf{GAA}$ as identity and on the basic electrical components~\eqref{eq:circpropsig} as described in Figure~\ref{fig:circsemantics}. We delay the translations of $\{\,\voltmeter,\,\ampmeter,\,\cvsource,\,\ccsource\,\}$ to later in this section and first focus on the generalised impedance boxes. Their translation is:
\[
\semElec{\lower12pt\hbox{$\includegraphics[height=1.4cm]{graffles/mnimpedanceBox.pdf}$}}
\quad = \quad
\lower20pt\hbox{$\includegraphics[height=1.8cm]{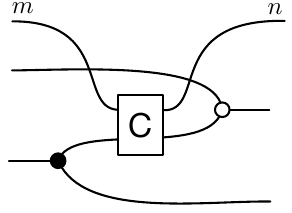}$}.
\]


We again use $\semEq$ and $\semLeq$ as circuit relations to mean $=$ and $\leq$ on their translations to $\mathsf{GAA}$, using its (in)equational theory.
The properties of Lemma~\ref{lem:impedanceProperties} easily generalise to these extended impedance boxes. We omit the details, and only mention one simple, but useful fact: $\mathsf{GAA}$ diagrams can breach 
impedance boxes. Suppose that $d\from m\rightarrow m'$, $e\from n\rightarrow n'$ in $\ecircprop$ are built up using only the $\mathsf{GAA}$ generators, then:
\begin{observation}
\[
\lower20pt\hbox{$\includegraphics[height=1.5cm]{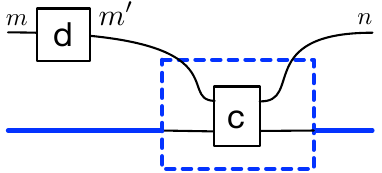}$}
\semEq
\lower20pt\hbox{$\includegraphics[height=1.7cm]{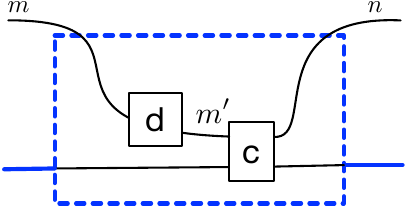}$}
\ \text{and}\
\lower20pt\hbox{$\includegraphics[height=1.7cm]{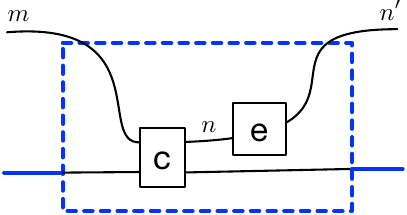}$}
\semEq
\lower20pt\hbox{$\includegraphics[height=1.5cm]{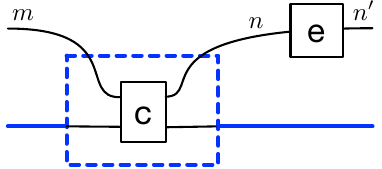}$}.
\]
\end{observation}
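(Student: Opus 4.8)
The plan is to reduce the claimed circuit equation to an identity of $\mathsf{GAA}$ diagrams by applying the translation $\mathcal{I}\from\ecircprop\rightarrow\mathsf{GAA}$ and then exploiting its functoriality, so that no reasoning in $\IH$ is really needed. Since $c\semEq d$ means $\semElec{c}=\semElec{d}$, it suffices to show that $\semElec{-}$ of the two sides of each equation coincide. I would first record the two facts that do all the work: $\mathcal{I}$ is a \emph{strict monoidal} functor, and it acts as the identity on the $\mathsf{GAA}$ fragment. Hence for the $\mathsf{GAA}$-built diagrams $d\from m\rightarrow m'$ and $e\from n\rightarrow n'$ we have $\semElec{d}=d$ and $\semElec{e}=e$, while on objects $\semElec{\bluebullet}=2$ and $\semElec{\bullet}=1$, so $\semElec{d\otimes\mathrm{id}_{\bluebullet}}=d\otimes\mathrm{id}_2$.

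Next I would unfold the definition of the generalised impedance box. By construction $\semElec{\text{box}}$ is a fixed piece of plumbing into which the parametrising diagram $c\from m+1\rightarrow n+1$ is inserted, and the crucial structural feature---read directly off the defining diagram of $\semElec{-}$ on generalised boxes---is that the $m$ information input wires run straight into the first $m$ inputs of $c$ as plain $\mathsf{GAA}$ wires (and dually the first $n$ outputs of $c$ connect straight to the $n$ information output wires), whereas the remaining $+1$ leg of $c$, together with the two $\mathsf{GAA}$ wires of the single electric terminal $\bluebullet$, is governed by plumbing that is monoidally disjoint from the information wires.

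Given this, the left breach is immediate. Applying $\mathcal{I}$ to $(d\otimes\mathrm{id}_{\bluebullet})$ precomposed with the box yields $(d\otimes\mathrm{id}_2)$ precomposed with $\semElec{\text{box}}$; since $d$ meets exactly the $m$ information inputs that feed directly into $c$, the interchange law together with associativity of composition lets $d$ slide across the box boundary, producing the box semantics for the absorbed parameter $(d\otimes\mathrm{id}_1)\mathrel{;}c$. Because $\mathcal{I}$ is the identity on $\mathsf{GAA}$, this is precisely $\semElec{-}$ of the right-hand diagram, in which $d$ now sits inside the box. The right-hand breach with $e$ is entirely symmetric, sliding $e$ along the $n$ information output wires and absorbing it as $c\mathrel{;}(e\otimes\mathrm{id}_1)$.

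The only point requiring care---and the nearest thing to an obstacle---is confirming that the information wires attach to $c$ with no intervening electric plumbing, so that $d$ and $e$ really are disjoint (monoidally) from the electric leg and may be transported freely. Once this is checked against the definition of $\semElec{-}$ on generalised impedance boxes, the statement follows purely from functoriality of $\mathcal{I}$ and the interchange law of the monoidal category, and I would present it with a short string-diagrammatic deformation rather than an equational derivation.
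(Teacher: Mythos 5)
Your proposal is correct and matches the argument the paper intends: the paper deliberately omits the proof of this Observation, calling it a ``simple, but useful fact,'' precisely because it reduces---as you show---to unfolding the definition of $\semElec{-}$ on generalised impedance boxes (where the $m$ information inputs and $n$ information outputs attach directly to the parameter $c$, monoidally disjoint from the electric-wire plumbing) and then sliding $d$ and $e$ across the boundary via strict monoidality of $\mathcal{I}$ and the interchange law. Your careful flagging of the one point needing verification (that no electric plumbing intervenes on the information wires) is exactly the right check, and your argument in fact yields equality already in the free prop, before any quotienting by the equational theory.
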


\subsection{Meters and controlled sources}

We now turn to the translations of the voltmeter and the ammeter.
\[
\semElec{\voltmeter} \quad = \quad \lower15pt\hbox{\includegraphics[height=1.5cm]{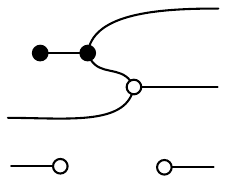}}
\qquad\qquad
\semElec{\ampmeter} \quad = \quad \lower15pt\hbox{\includegraphics[height=1.5cm]{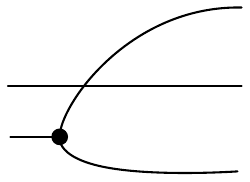}}
\]
Note that
orientation
follows the conventions
used so far:
current and voltage difference are right to left.

We can finally observe the behavior of closed circuits!
Consider a simple circuit, with just a battery and a resistor. Having no
open electrical wires, it is closed, but has one outgoing information wire.
Mapping to $\mathsf{GAA}$, we get a diagram with one output, giving us the computed value for the current.

\begin{multline*}
\semElec{\lower15pt\hbox{\includegraphics[height=1.5cm]{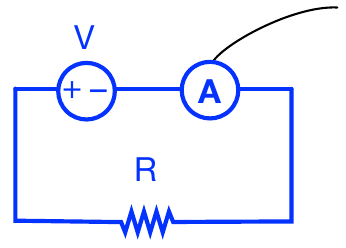}}}
=\
\lower25pt\hbox{\includegraphics[height=2cm]{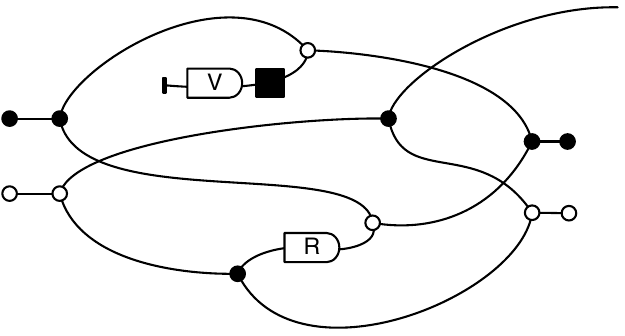}}
\!\!=
\!\lower22pt\hbox{\includegraphics[height=1.8cm]{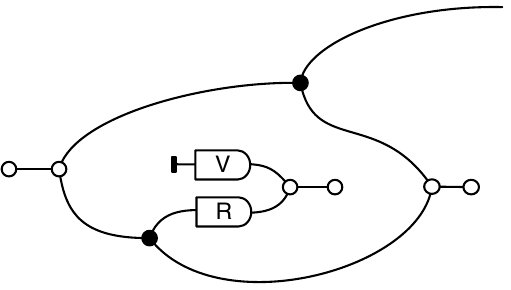}}
\!\!\!\!\!=
\lower12pt\hbox{\includegraphics[height=1.2cm]{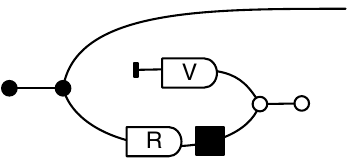}}
\!\!\!\!=\
\lower2pt\hbox{\includegraphics[height=.25cm]{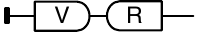}}
\end{multline*}
Although elementary enough above, these kinds of calculations can be drastically simplified using the impedance calculus by keeping the GAA diagrams small, as we shall demonstrate in the following.

\medskip
First let us turn to controlled sources.
Those are like independent sources, but their value is controlled by an information wire.
Their translations to $\mathsf{GAA}$ are given below.
\[
\semElec{\cvsource} \quad = \quad \lower12pt\hbox{\includegraphics[height=1.2cm]{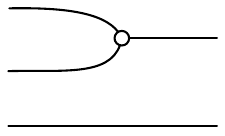}}\ ,\,\
\semElec{\ccsource} \quad = \quad \lower12pt\hbox{\includegraphics[height=1.2cm]{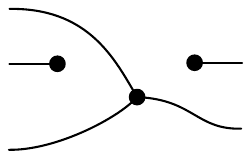}}.
\]

As in the previous section, it is usually simpler to combine impedances directly
rather than map them fully to GAA. This can be extended to work with meters too, giving
a convenient way to solve many circuits.
Using the generalised impedance boxes, we have:
\[
\voltmeter \semEq \lower10pt\hbox{\includegraphics[height=1.1cm]{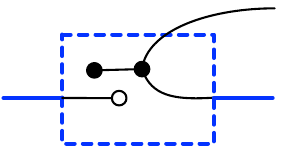}}
\ \quad\
\ampmeter \semEq \lower10pt\hbox{\includegraphics[height=1.1cm]{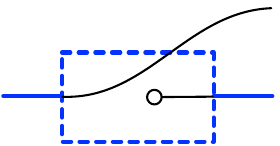}}
\ \quad\
\ccsource \semEq \lower10pt\hbox{\includegraphics[height=1.1cm]{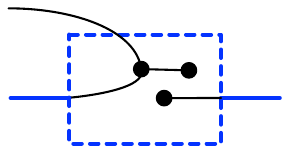}}
\ \quad\
\cvsource \semEq \lower10pt\hbox{\includegraphics[height=1.1cm]{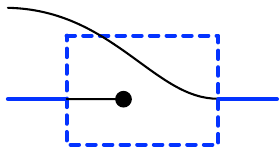}}
\]

We can redo our simple example using this new technology:

\[
\lower10pt\hbox{\includegraphics[height=1.3cm]{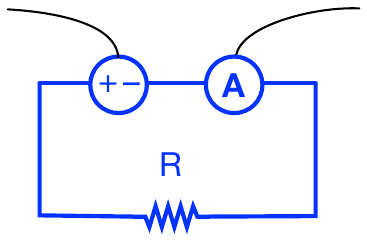}}
\semEq
\lower16pt\hbox{\includegraphics[height=1.6cm]{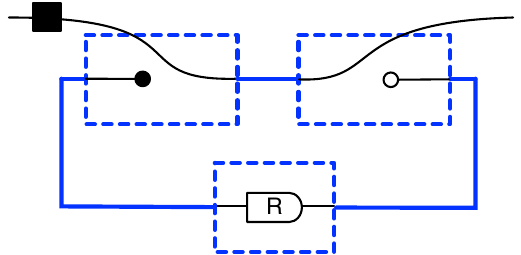}}
\semEq
\lower18pt\hbox{\includegraphics[height=1.8cm]{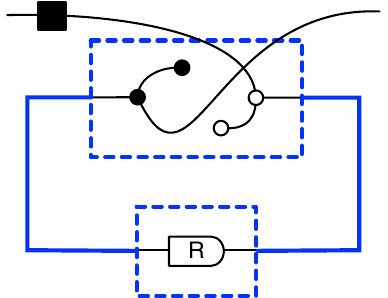}}
\semEq
\lower17pt\hbox{\includegraphics[height=1.7cm]{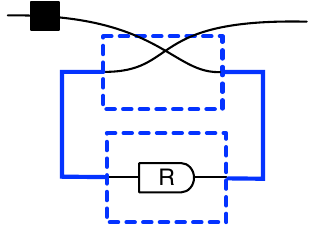}}
\semEq
\lower12pt\hbox{\includegraphics[height=1.2cm]{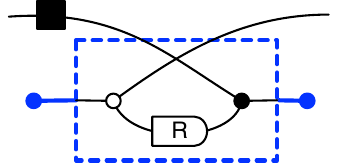}}
\]
and so
\[
\semElec{\lower14pt\hbox{\includegraphics[height=1.3cm]{graffles/simplified1.pdf}}}
=
\lower9pt\hbox{\includegraphics[height=.9cm]{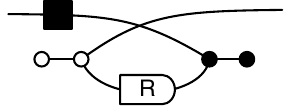}}
=
\lower3pt\hbox{\includegraphics[height=.3cm]{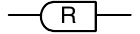}}
\]

An important property of meters and sources is that they can be considered one at a time.
This is often used classically to keep calculations manageable.
The following two theorems make this 
precise.
\subsection{Independent Measurement Theorem}

The first theorem is quite intuitive: 
measuring somewhere in a circuit ought not to affect measurements elsewhere.
In other words, we can extract the full behavior of a circuit by considering measurements one at a time.
In fact, this is so natural that it is just assumed to be true classically,
and not mentioned in textbooks.
We are being more rigorous here: 
the result is more subtle than one might think.

To get a single measurement we discard (i.e.\ plug $\Bcounit$ into) all meters but one.
It is easy to show that an ignored ammeter is equivalent to a wire, and an ignored voltmeter to an open circuit.
\[
\lower5pt\hbox{\includegraphics[height=.7cm]{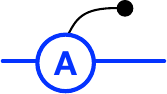}}
 \ \semEq\ \wire \qquad
\lower5pt\hbox{\includegraphics[height=.7cm]{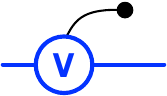}}
 \ \semEq\ \opencircuit
\]

\begin{theorem}[Independent Measurement Theorem]\label{thm:im}
    Given a closed circuit $C$ with $n$ meter outputs,
    the behavior of the $n$ measurements simultaneously is a subset of the behavior
    of one measurement at a time, ignoring the others.
    The proof is a simple derivation, using cartesian bicategory structure:
    \[
    \lower25pt\hbox{\includegraphics[height=2.1cm]{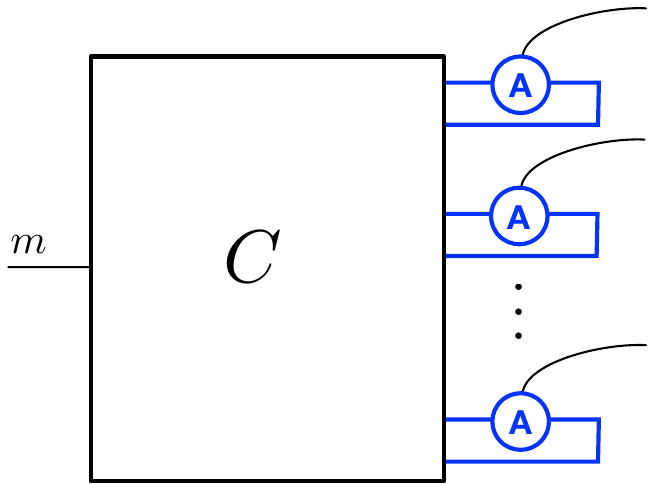}}
    \ = \
    \lower13pt\hbox{\includegraphics[height=1cm]{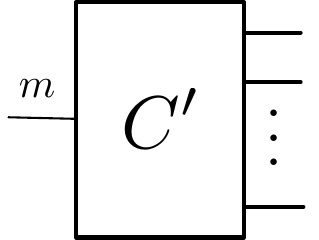}}
    \semEq
    \lower35pt\hbox{\includegraphics[height=2.8cm]{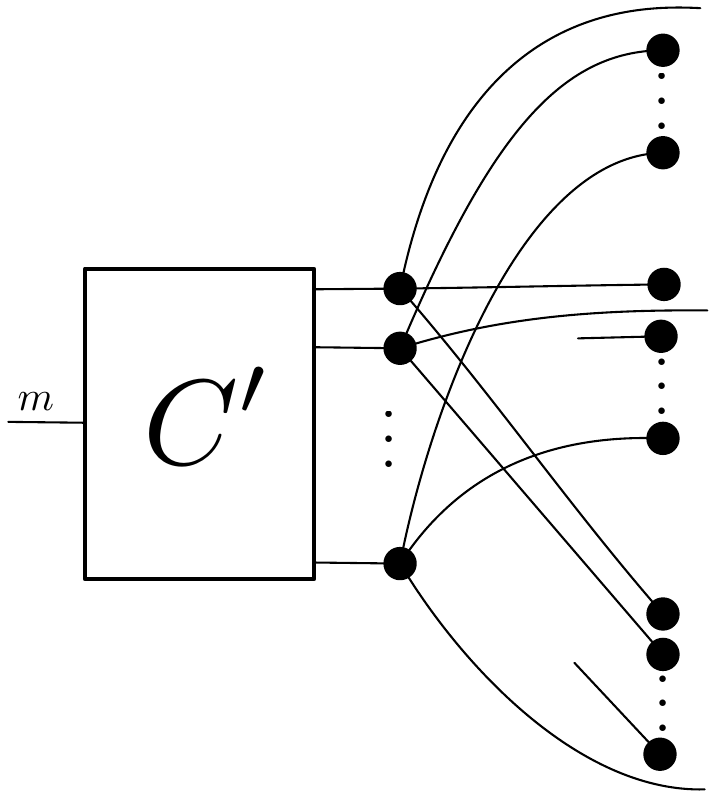}}
    \ \semLeq \
    \lower33pt\hbox{\includegraphics[height=2.5cm]{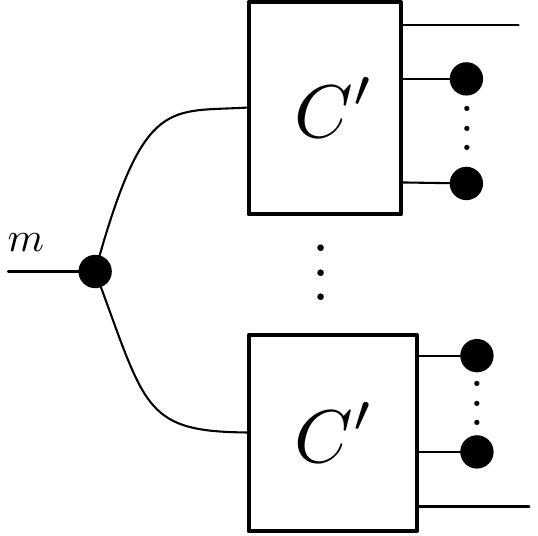}}
    \ = \
    \lower55pt\hbox{\includegraphics[height=4.5cm]{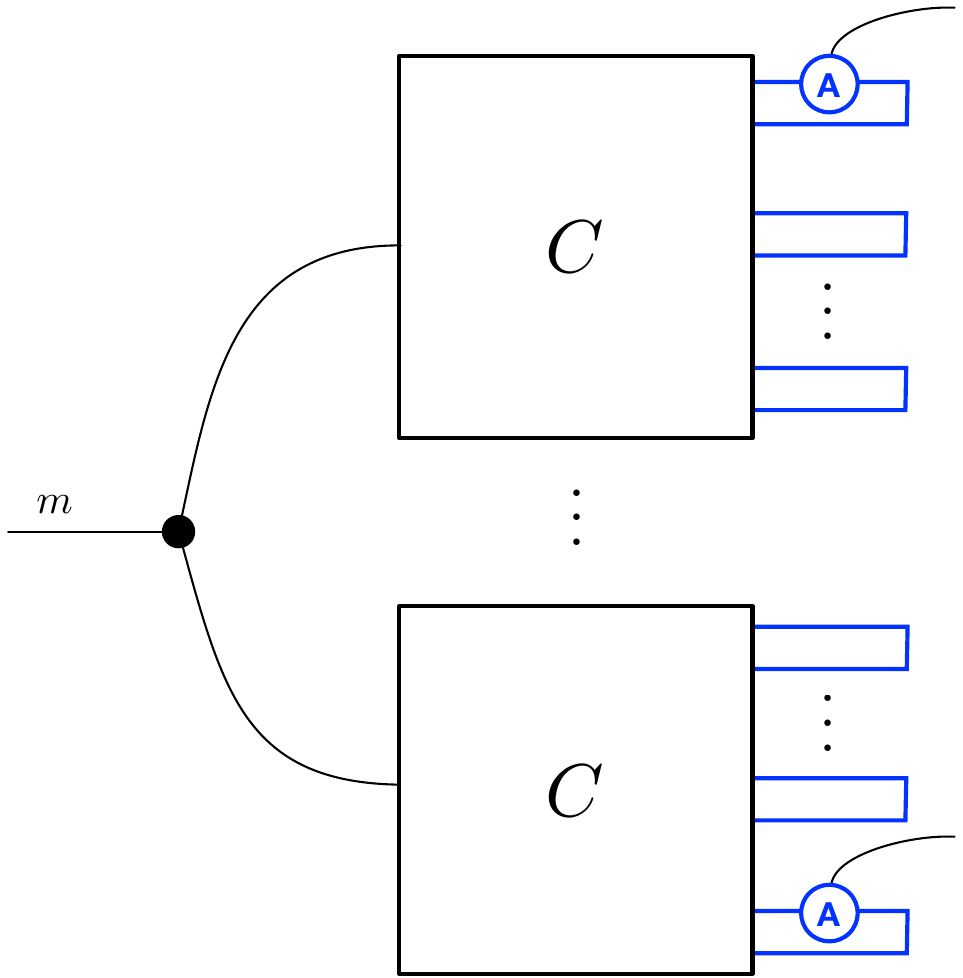}}
    \]
    The diagram above illustrates this for ammeters,
    but it applies to any mixture of meters,
    remembering that a discarded voltmeter is an open circuit.
    If moreover each individual measurement is a function of the inputs
    (i.e.\ single-valued and total), then the inclusion is an equality.
\end{theorem}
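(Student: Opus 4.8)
The plan is to read the statement entirely inside $\AffRel{\R(x)}$, where $\semElec{C}$ is an affine relation $R\from\mathsf{k}^p\to\mathsf{k}^n$ (with $p=0$ in the purely closed case, and $p>0$ when $C$ still has free information inputs). Write $M_i\from\mathsf{k}^p\to\mathsf{k}$ for the $i$-th ``measurement in isolation'', obtained from $C$ by keeping the $i$-th output wire and discarding the other $n-1$ outputs; the ``one measurement at a time'' behaviour is then the tuple $\langle M_1,\dots,M_n\rangle$, obtained by copying the $p$ inputs $n$ times and running the $j$-th copy through $M_j$. I would exploit throughout that, as recalled above, $\AffRel{\R(x)}$ is a cartesian bicategory of relations, so every object carries a copy/discard comonoid whose comultiplication is only \emph{lax} natural.

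For the inclusion I would first record the purely structural identity $\delta_n\mathbin{;}D=\mathrm{id}_n$, where $\delta_n$ copies the whole $n$-wire output bundle $n$ times and $D$ keeps the $j$-th wire of the $j$-th block while discarding the rest: this is just the iterated counit and coassociativity laws of the copy comonoid, hence an exact $\mathsf{GAA}$ equation. Precomposing with $C$ and then pushing $C$ across the (iterated) copy by lax naturality of the comultiplication --- the single $\semLeq$ of the displayed derivation --- gives
\[
R \;\semEq\; \semElec{C}\mathbin{;}\delta_n\mathbin{;}D \;\semLeq\; \Delta_p\mathbin{;}\semElec{C}^{\otimes n}\mathbin{;}D \;\semEq\; \langle M_1,\dots,M_n\rangle,
\]
where $\Delta_p$ copies the inputs $n$ times and $\semElec{C}^{\otimes n}\mathbin{;}D$ turns the $j$-th copy of $C$ into $M_j$. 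Every step other than the middle one is an exact comonoid law or ordinary (functorial) composition, so the inclusion drops out with a single use of copy-laxness.

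For the equality I would observe that this middle step is the \emph{only} inequality, and that the lax comultiplication law $f\mathbin{;}\delta=\delta\mathbin{;}(f\otimes f)$ becomes an \emph{equality} as soon as $f$ is single-valued (and any map certainly is). So it suffices to deduce single-valuedness of the joint relation $C$ from the hypothesis on the separate $M_i$: if each $M_i$ is single-valued then, for every input $x$, each coordinate of the fibre $R_x$ takes at most one value, so $R_x$ sits inside a single tuple and $C$ is single-valued; the copy step is then exact, the whole chain collapses, and $R\semEq\langle M_1,\dots,M_n\rangle$. Totality of the $M_i$ is what lets one phrase the hypothesis as ``each measurement is a function'', but single-valuedness alone already forces each fibre to be a box and hence yields the equality.

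The main obstacle is this equality direction rather than the inclusion: I must pin down exactly which cartesian-bicategory law is tightened --- that comultiplication commutes strictly with $f$ iff $f$ is single-valued --- and match it to the hypothesis, which is phrased pointwise on the separate measurements $M_i$ rather than on the joint relation $C$. The accompanying check is that the discard pattern $D$ and the formation of the $M_i$ contribute no further laxness (they are exact comonoid equations and plain composition), so that no totality condition beyond the stated one is secretly required; the same argument then applies verbatim to any mixture of ammeters and voltmeters, using that an ignored ammeter is a wire and an ignored voltmeter an open circuit.
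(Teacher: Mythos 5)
Your proposal is correct and follows essentially the same route as the paper: the displayed chain \emph{is} the paper's proof, with the counit/coassociativity laws of the copy comonoid giving the strict steps and the single lax-naturality step $f\mathbin{;}\delta \leq \delta\mathbin{;}(f\otimes\cdots\otimes f)$ in the cartesian bicategory $\AffRel{\R(x)}$ giving the inclusion. Your fibrewise treatment of the equality case (which the paper asserts without argument) is also sound, and correctly notes the mildly stronger fact that single-valuedness of each $M_i$ alone forces $\semElec{C}$ to be single-valued---matching empty fibres on both sides---so the lax step collapses to an equality without invoking totality.
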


The inclusion above is an equality for all ``well-behaved'' circuits.
Classically, equality is postulated. 
It is, of course, reasonable to do so:
if the inclusion is strict,
it must be the case that
either \textit{(i)} that some meter can return a nonzero value with all sources off,
i.e.\ it is not measuring anything physical,
or \textit{(ii)} that some settings of the sources are disallowed and can cause the semantics to be empty,
which is degenerate.
For example, in the following example we have accidentally short-circuited the source,
 meaning that we cannot turn it on with a nonzero value.
This, indeed, is a degenerate case and we do not have equality.
\[
\lower25pt\hbox{\includegraphics[height=2cm]{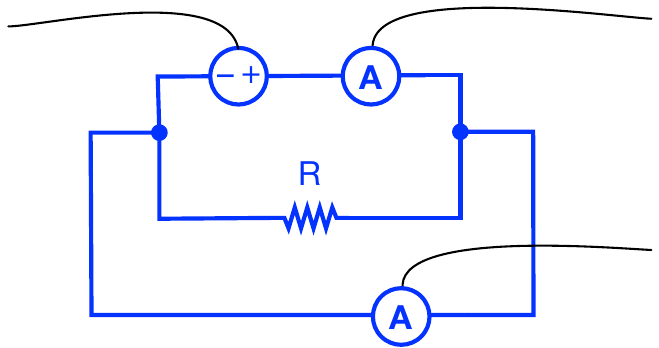}}
\semEq\
\lower10pt\hbox{\includegraphics[height=.8cm]{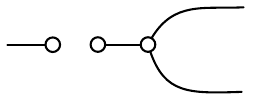}}
\ \neq\
\lower10pt\hbox{\includegraphics[height=.8cm]{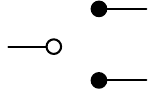}}
\ \semEq
\lower50pt\hbox{\includegraphics[height=3.5cm]{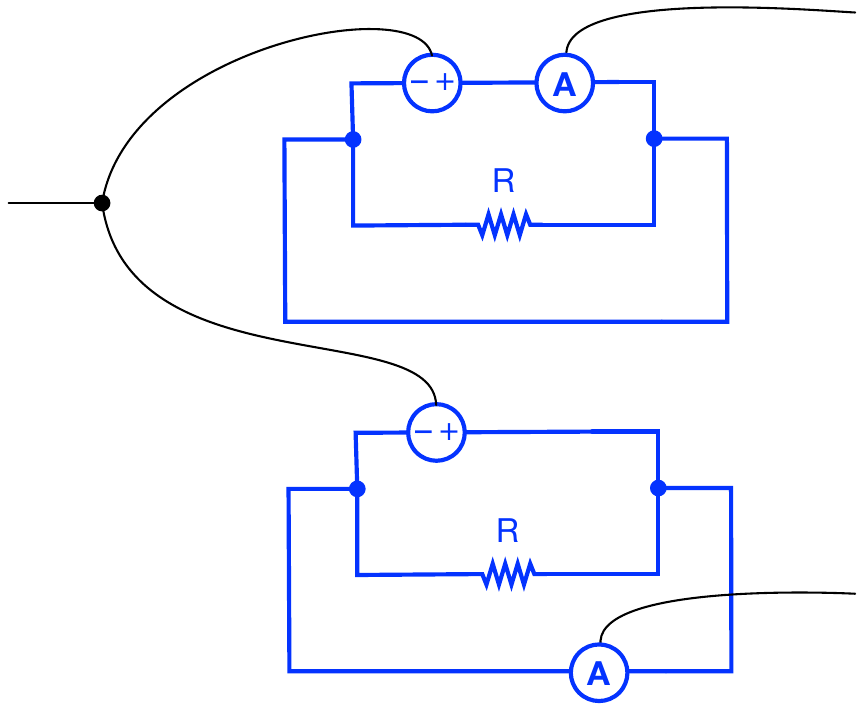}}
\]
It does not appear easy to characterise graphically
the circuits with such behavior.
Instead we give the result in its general form,
trusting that equality is easy to notice when calculating the right-hand side.

\subsection{Superposition Theorem}

The second theorem is amongst the most useful results of classical circuit theory.
The superposition theorem ensures that the behavior of a circuit with multiple sources
can be calculated from its behaviors with one source turned on at a time.

To do this, we set all sources but one to zero by plugging $\Wunit$.
It is easy to show that a zero current source is equivalent to a disconnected circuit,
and a zero voltage source is equivalent to a plain wire.

\[
\lower5pt\hbox{\includegraphics[height=.7cm]{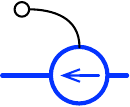}}
 \ \semEq\ \opencircuit \qquad
\lower5pt\hbox{\includegraphics[height=.7cm]{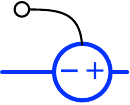}}
 \ \semEq\ \wire
\]

In a precise sense it is the dual of the independent measurement theorem:
they are related by swapping colors and vertical reflection,
which is a powerful operation of $\mathsf{GLA}$
that often generates elegant dualities.

\begin{theorem}[Superposition Theorem]\label{thm:superposition}
    The behaviour of closed circuit $C$ with $m$ source inputs and no independent sources
     is a superset of the sum of its behaviors
     with one source turned on at a time:
    \[
    \lower55pt\hbox{\includegraphics[height=4.5cm]{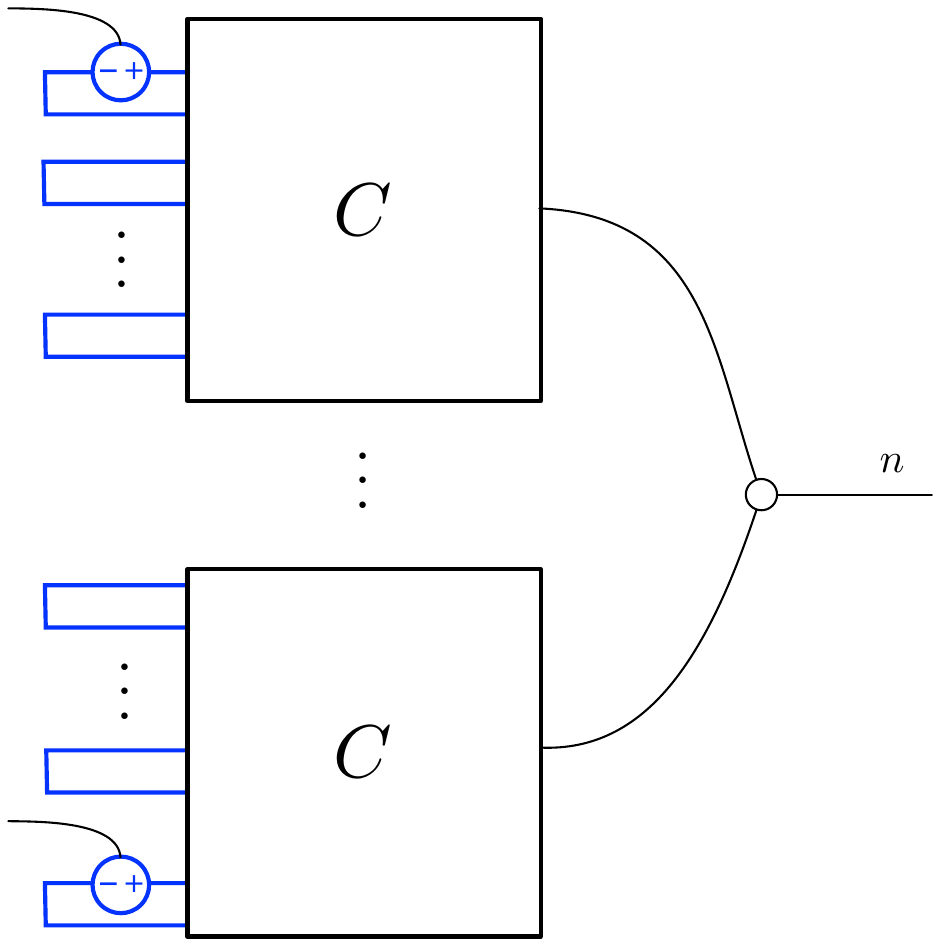}}
    \ = \
    \lower33pt\hbox{\includegraphics[height=2.5cm]{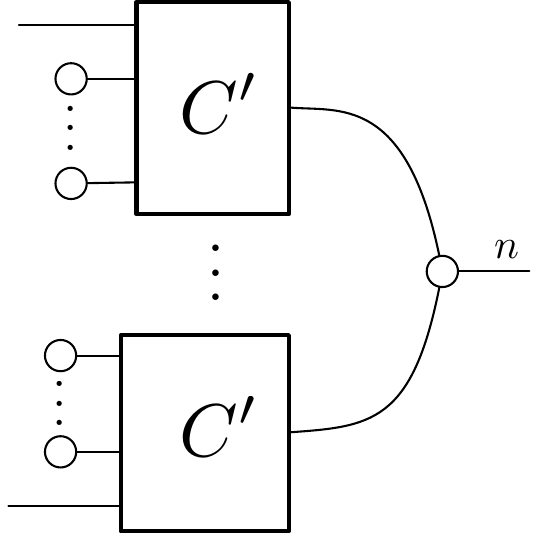}}
    \ \semLeq \
    \lower35pt\hbox{\includegraphics[height=2.8cm]{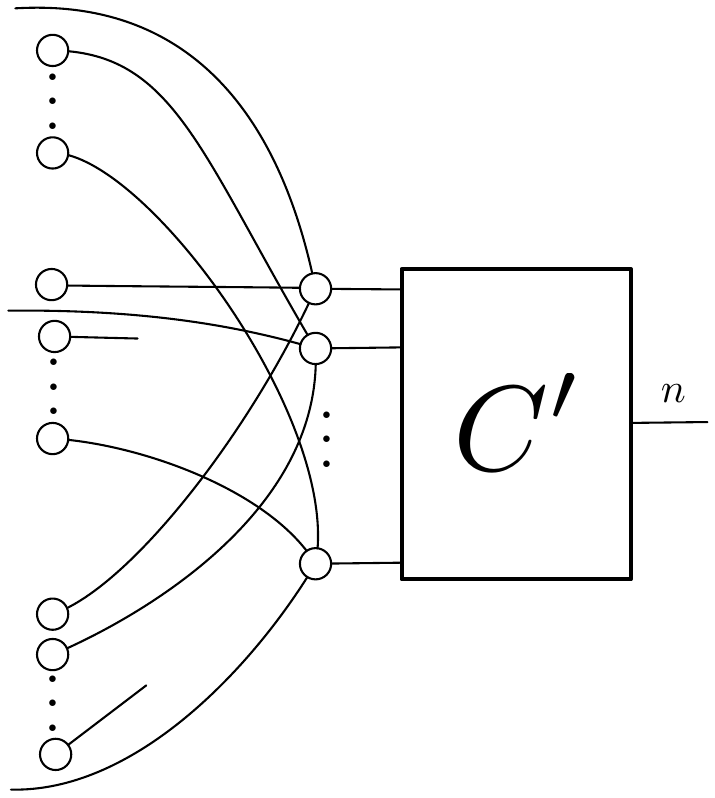}}
    \semEq
    \lower13pt\hbox{\includegraphics[height=1cm]{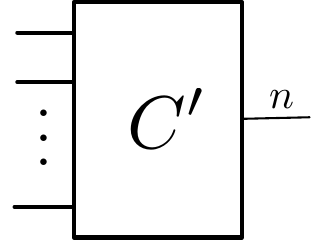}}
    \ = \
    \lower25pt\hbox{\includegraphics[height=2.1cm]{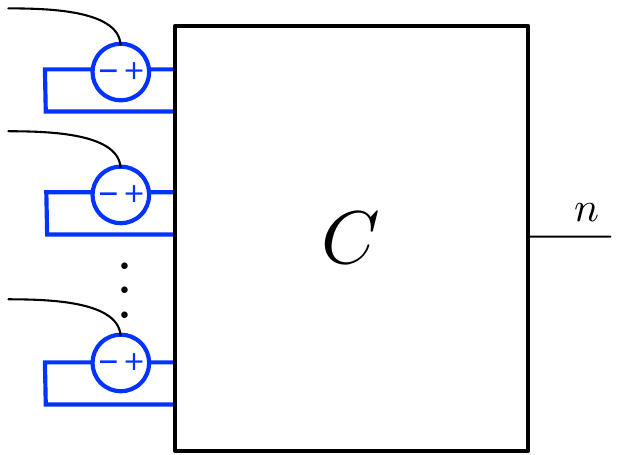}}
    \]
    The diagram illustrates this with voltage sources,
    but it applies to any mixture of sources,
    given that a turned-off current source is an open circuit.
    If moreover the outputs are a function (i.e.\ single-valued and total)
    of each source when activated individually,
    then the inclusion is an equality.
\end{theorem}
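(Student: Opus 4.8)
The plan is to leverage the duality with the Independent Measurement Theorem (Theorem~\ref{thm:im}) that the text already advertises: the two statements are interchanged by the involution of $\mathsf{GAA}$ that swaps the black (copying) and white (adding) structures and reflects diagrams top-to-bottom. Under this involution meters become sources, discarding (plugging $\Bcounit$) becomes zeroing (plugging $\Wunit$), products of behaviours become sums, and---crucially---the homset order is reversed (this reversal is forced: color-swap alone would send the order generator $\Wunit \leq \Bunit$ to the false $\Bunit \leq \Wunit$). Hence the ``$\subseteq$'' of Theorem~\ref{thm:im} becomes the ``$\supseteq$'' asserted here. So the first thing I would do is check that applying this involution to the diagrammatic chain of Theorem~\ref{thm:im} yields exactly the chain displayed in the statement; if so, the proof transports verbatim. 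Since a reader may not want to verify the involution acts correctly on the whole multisorted signature, I would in parallel present the chain directly, as the statement does, and justify each step.

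Justifying the chain, the two outer equalities are structural bookkeeping: the left one rewrites the sum-of-single-source behaviours into a normal form in which the $m$ zeroings (plugs of $\Wunit$) and the final summation ($\Wmult$) are exposed, and the right one repackages the all-sources-on diagram as the full behaviour of $C$. The $\semEq$ step is pure $\IH$ rewriting: bialgebra and (extra)special Frobenius laws let one fuse the per-source copies of $C$ into a single copy fed by all $m$ source wires at once. The load-bearing step is the inclusion $\semLeq$. This is precisely the lax naturality of the white, additive (co)monoid in the abelian/cartesian bicategory $\AffRel{\R(x)}$---the exact dual of the copying inequality that powers Theorem~\ref{thm:im}---and at bottom it is an instance of the single order generator $\Wunit \leq \Bunit$ propagated through the circuit by the bicategorical axioms.

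For the equality clause, the point is that lax naturality becomes strict naturality exactly on \emph{maps}: if the behaviour of $C$ under each individually activated source is total and single-valued, then the corresponding morphism is a map (has a right adjoint) in the cartesian bicategory, and maps commute on the nose with the comonoid structure; dually, the comap condition makes the additive inequality $\semLeq$ collapse to $\semEq$. Thus single-valuedness and totality of each single-source behaviour upgrade the inclusion to an equality, matching the stated hypothesis.

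The main obstacle I anticipate is not any one calculation but correctly locating the $\semLeq$ step: one must identify which morphism of the coloured prop $\ecircprop$ the lax-naturality inequality is applied to---bearing in mind that electrical wires carry two $\mathsf{GAA}$ wires while information wires carry one---and confirm that the hypothesis ``no independent sources'' is what guarantees the relevant sub-circuit is genuinely linear, so that the additive structure rather than the affine $\One$ generator governs the superposition. Getting the orientation of the duality right---which colour, which direction of $\leq$, maps versus comaps---is where the subtlety flagged in the discussion after Theorem~\ref{thm:im} really bites.
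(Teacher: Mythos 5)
Your proposal is correct and takes essentially the same route as the paper: transpose (colour-swap plus vertical flip) the Independent Measurement Theorem, using the fact that the absence of independent sources places the circuit's semantics in the $\mathsf{GLA}$ fragment where this duality exists and reverses the inclusion order, and that being a function is preserved under transposition. Your parallel direct justification of the chain via the additive structure corresponds to the paper's alternative argument that $\mathsf{GLA}$ modulo its inequational theory is an abelian bicategory of relations.
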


\begin{proof}
    Since $C$ has no independent sources, its mapping to $\mathsf{GAA}$ lands in the $\mathsf{GLA}$ fragment.
    Thus we can transpose (color-swap and flip) the result from the independent measurement theorem,
    noting that being a function is a self-transpose property,
    and color-swap reverses the direction of inclusion. Alternatively, one can
    use the fact that $\mathsf{GLA}$ modulo its inequational theory is an abelian category of relations~\cite{carboniCartesianBicategories1987}.
\end{proof}

The analysis of the equality case in the previous section also applies here via the duality.
Note that the theorem as stated forbids independent sources.
However it is easy to extend it to work with them too:
\[
\lower40pt\hbox{\includegraphics[height=3.2cm]{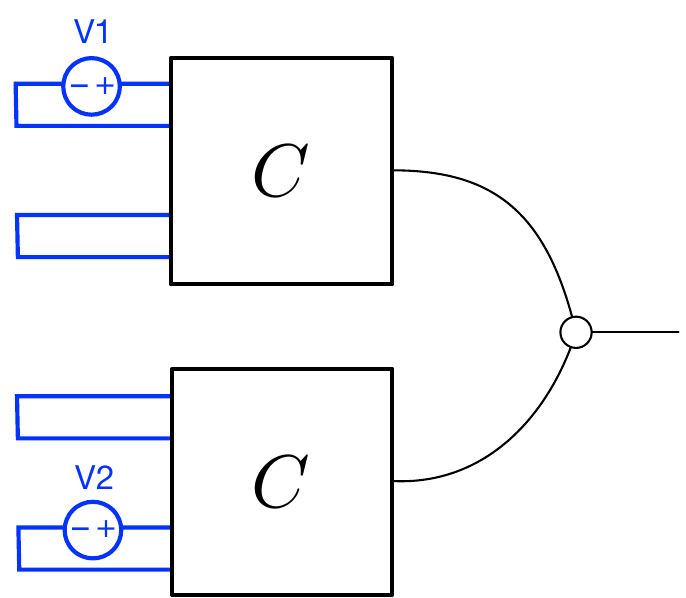}}
\ \semEq
\lower41pt\hbox{\includegraphics[height=3.3cm]{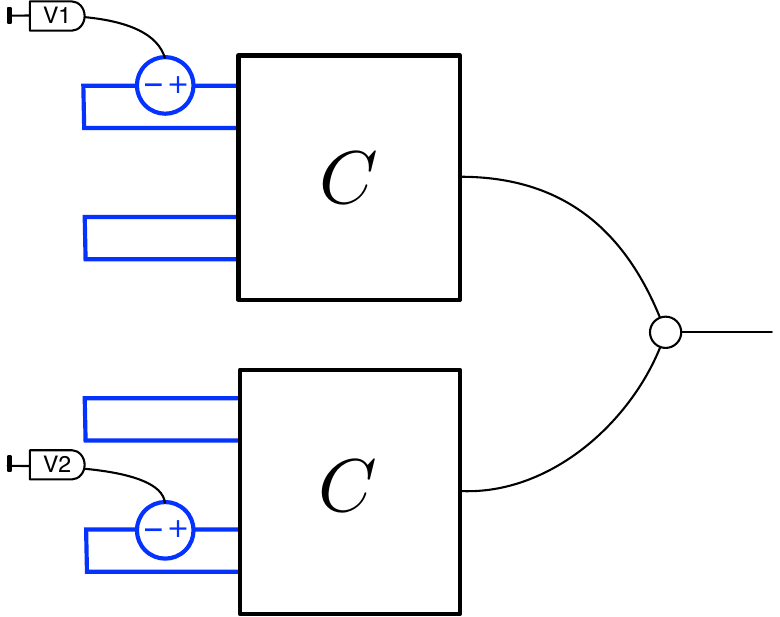}}
\semLeq\
\lower22pt\hbox{\includegraphics[height=1.7cm]{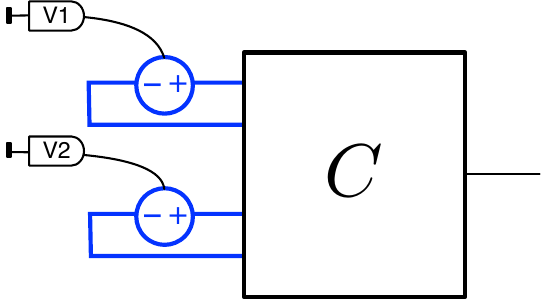}}
\semEq\
\lower22pt\hbox{\includegraphics[height=1.7cm]{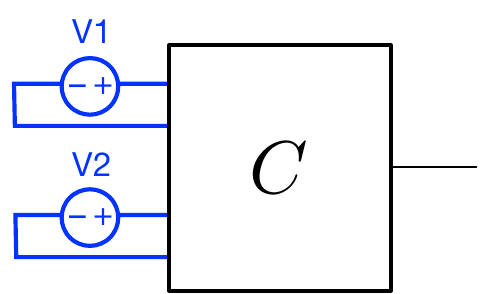}}
\]

\section{Structural theorems}

We conclude the paper with three results
that showcase the graphical language's power to state and prove non-trivial invariants. Indeed, given that our circuits are a bona fide syntax, we can use structural induction.

\subsection{Representation theorem}\label{sec:representation-thm}

Kirchhoff's laws imply two global invariants are satisfied by all circuits.
They can be elegantly stated and proved graphically,
and imply a very useful result:
all one-port circuits (those with one electrical input and one electrical output)
are representable by an impedance as described in \cref{sec:imp-calculus}.
Thus the impedance calculus can be used with any one-port circuit that we encounter.

\begin{proposition}[Relativity of potentials]\label{prop:relativePotentials}
    A circuit constrains voltage differences, not absolute voltages.
    That is, adding the same voltage difference to all open wires of a circuit does not change
    its behavior.
    \[
    \lower30pt\hbox{\includegraphics[height=3cm]{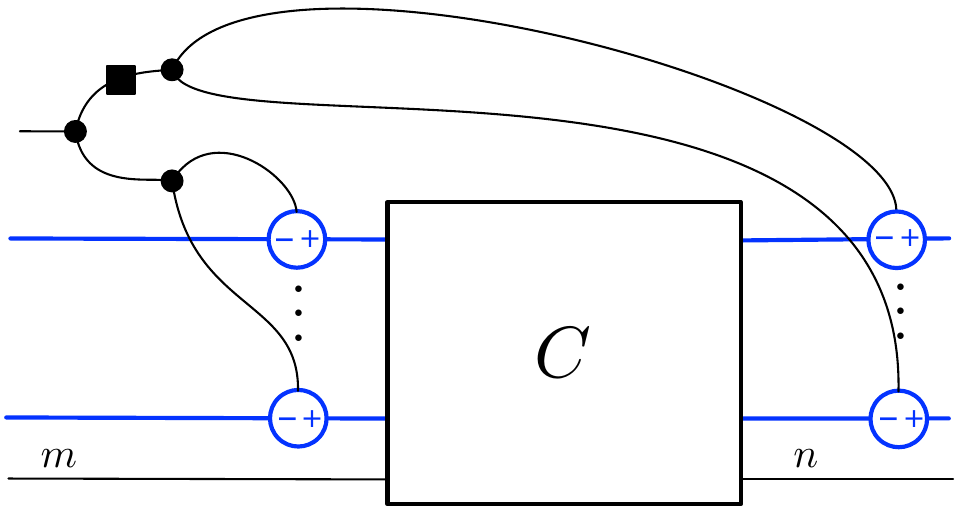}}
    \quad \semEq\quad
    \lower28pt\hbox{\includegraphics[height=2.2cm]{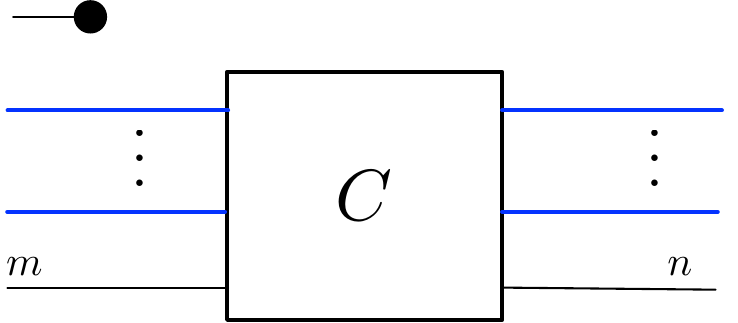}}
    \]
\end{proposition}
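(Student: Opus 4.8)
The plan is to proceed by structural induction on $C$, exploiting the fact that the circuit prop $\circprop$ (and its extension $\ecircprop$) is freely generated, together with the observation that \emph{every} clause of the semantics in Figure~\ref{fig:circsemantics} constrains only \emph{differences} of potentials and \emph{sums} of currents, never absolute potentials. Concretely, write $S_v$ for the one-electrical-wire $\mathsf{GAA}$ gadget that adds a value $v$ to the potential wire and is the identity on the current wire, and let $A^k_v$ be the gadget that takes a single $v$, copies it with $\Bcomult$ to all $k$ open wires, and applies $S_v$ to each. Reformulated, the proposition asserts that $\semElec{C}$ is invariant under this \emph{uniform} potential shift: feeding one shared $v$ into all $m+n$ open wires and then discarding $v$ leaves the denotation unchanged.

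First I would establish the base case as a family of \emph{sliding squares}: for each generator $g\from m\to n$, driving the shared $v$ into the potential wires on the left equals driving the same $v$ into the potential wires on the right, i.e. $A^m_v\mathrel{;}\semElec{g} \semEq \semElec{g}\mathrel{;}A^n_v$. Each case is immediate from the semantic clauses: for a resistor $(\phi_2+v)-(\phi_1+v)=\phi_2-\phi_1=Ri$, and identically for the voltage and current sources, the inductor and the capacitor, whose constraints all have the shape ``difference $=\,\cdots$'' or ``$i=\cdots$''; for the junctions the single potential $\phi$ labels every leg, so a common shift keeps them equal and never touches $i_1+i_2+i_3=0$; and for the units/counits the constraint $i=0$ ignores potential entirely. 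The meters and controlled sources are handled the same way, since their information outputs read only potential differences and currents.

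Next I would promote the sliding squares to arbitrary circuits by induction on syntax. Sliding is preserved by sequential composition, since the shared $v$ simply threads through the intermediate electrical wires, and by monoidal product, since the single $v$ is distributed to both factors by the copy comonoid and slides through each independently. Both steps are routine uses of the $\IH$ equations---cocommutativity and coassociativity of $\Bcomult$, together with the bialgebra laws relating copying to white addition---which let the copy of $v$ commute past the structural plumbing. Having slid every left-hand shift through $C$ to meet the right-hand shifts, the shared $v$ is collected and discarded, and invariance of $\semElec{C}$ collapses the left-hand side to $C$ alone; the degenerate case $\semElec{C}=\emptyset$ is subsumed uniformly, as both sides are then empty.

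The hard part will be the bookkeeping of the \emph{shared} $v$ across the monoidal product: because one and the same value must be added to all open wires, the inductive step for $\otimes$ cannot treat the two factors in isolation but must track how the copying of $v$ interacts with the wiring, which is precisely where the bialgebra (and Frobenius) equations of $\IH$ carry the argument. Once that coherence is in place, the sequential-composition step and the final collapse to $C$ are purely mechanical.
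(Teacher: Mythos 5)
Your proof is correct and is, at heart, the same argument as the paper's: the appendix proof also proceeds by structural induction, and its base cases are exactly your sliding squares, phrased there as ``circuit elements commute with voltage sources'' (established via the impedance calculus) together with a separate commutation for junctions. The instructive difference is in the choice of inductive invariant, and it inverts your assessment of where the difficulty lies. The paper keeps the shift value $v$ as a \emph{parameter} labelling a voltage source, not a shared, copied wire; consequently parallel composition is immediate (the shifts on the boundary of $C_1\otimes C_2$ literally factor as shifts on $C_1$ tensored with shifts on $C_2$, no comonoid bookkeeping needed), and the work lands in \emph{sequential} composition, where shifts must be conjured on the middle boundary by inserting cancelling $+v$/$-v$ source pairs. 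That cancellation---a $v$-source followed by a reversed $v$-source equals a plain wire---is precisely the paper's opening lemma, and note that your own final ``collapse'' step uses it tacitly when the slid left-hand shifts meet the right-hand ones; you should state it explicitly. You instead strengthen the hypothesis to a naturality statement, which with $v$ as an explicit wire reads $A^m_v \mathrel{;} \semElec{g} \;=\; (\mathrm{id}\otimes\semElec{g}) \mathrel{;} A^n_v$ (mind the typing: the $v$-wire must thread alongside $\semElec{g}$ on the right), so sequential composition becomes mere pasting of naturality squares and the bookkeeping migrates to the monoidal case, handled by cocommutative-comonoid coherence exactly as you say. Both routes are sound: your ``hard part'' dissolves entirely under the paper's parametric treatment of $v$, at the price of the cancellation lemma in the sequential case, while your explicit-wire formulation generalises more smoothly to the $\ecircprop$ setting where the shared value genuinely lives on an information wire.
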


\begin{proof}
    First, a lemma:
    \[
    \lower12pt\hbox{\includegraphics[height=1.2cm]{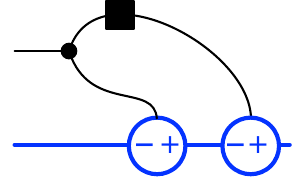}}
    \semEq
    \lower12pt\hbox{\includegraphics[height=1.2cm]{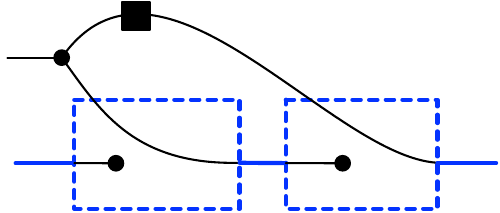}}
    \semEq
    \lower12pt\hbox{\includegraphics[height=1.2cm]{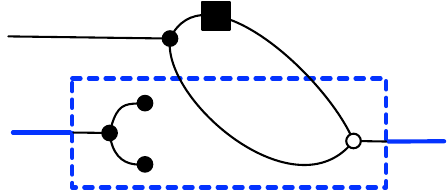}}
    \semEq
    \lower12pt\hbox{\includegraphics[height=1cm]{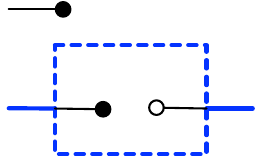}}
    \semEq\
    \lower6pt\hbox{\includegraphics[height=.5cm]{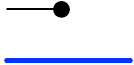}}
    \]

    Then using the impedance calculus we notice that voltage sources commute with circuit elements and junctions:
    \[
     \lower3pt\hbox{\includegraphics[height=.7cm]{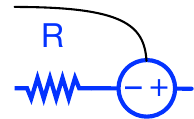}}
     \semEq
     \lower8pt\hbox{\includegraphics[height=1cm]{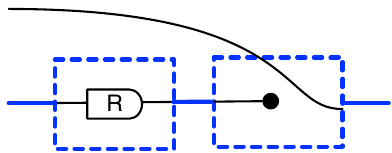}}
     \semEq
     \lower8pt\hbox{\includegraphics[height=1cm]{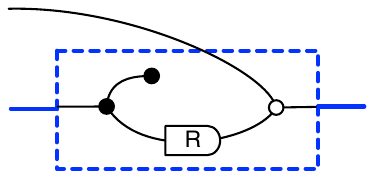}}
     \semEq
     \lower6pt\hbox{\includegraphics[height=.8cm]{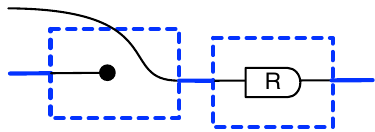}}
     \semEq
     \lower3pt\hbox{\includegraphics[height=.7cm]{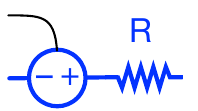}}
    \]

    \[
    \semElec{\lower10pt\hbox{\includegraphics[height=1.2cm]{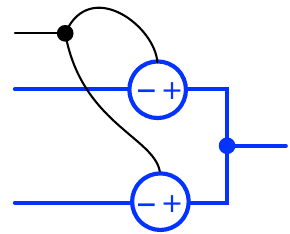}}}
    =
    \lower8pt\hbox{\includegraphics[height=1cm]{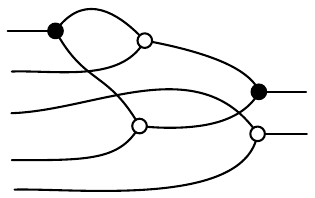}}
    =
    \lower8pt\hbox{\includegraphics[height=1cm]{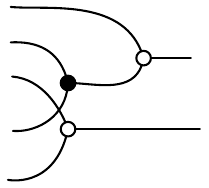}}
    =
    \semElec{\lower8pt\hbox{\includegraphics[height=1cm]{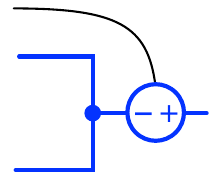}}}
    \]

    This together with the first lemma allows us to pass the voltage sources through any circuit.
    This can be made formal using induction but we omit the details for brevity.
\end{proof}

\begin{proposition}[Conservation of currents]\label{prop:conservationCurrents}
    The sum of the currents going into a circuit is equal to the sum of the outgoing currents.
    \[
    \lower30pt\hbox{\includegraphics[height=3cm]{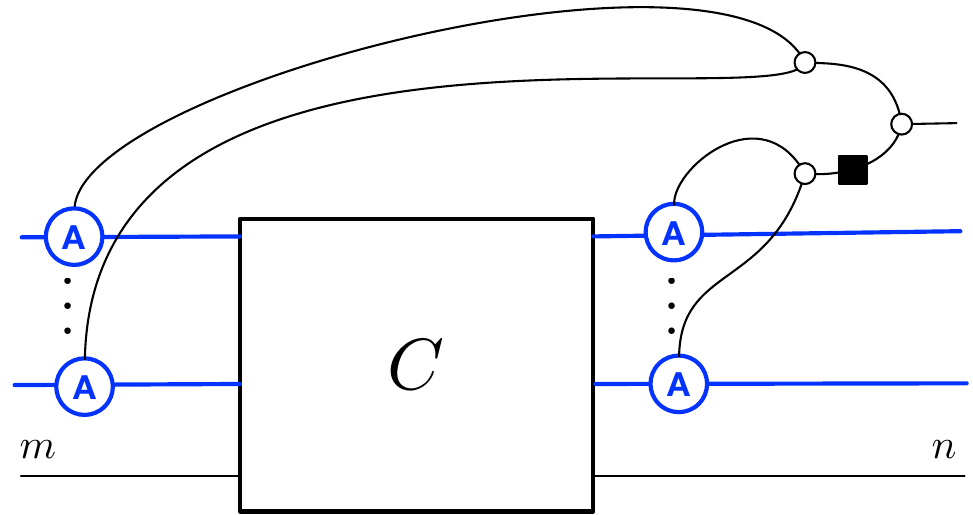}}
    \quad \semEq\quad
    \lower30pt\hbox{\includegraphics[height=2.3cm]{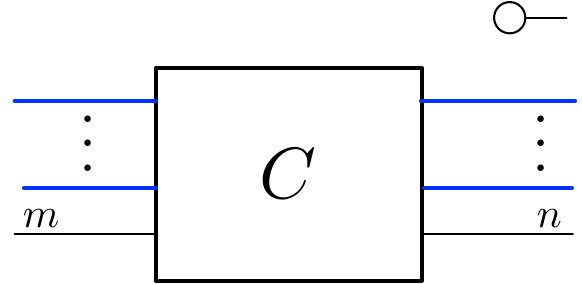}}
    \]
\end{proposition}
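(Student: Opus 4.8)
The plan is to prove the statement by \emph{structural induction} on circuit syntax, mirroring the proof of Proposition~\ref{prop:relativePotentials}. Since $\circprop$ is the free prop on the signature~\eqref{eq:circpropsig}, it suffices to verify the conservation equation on each generator and to show that it is preserved by sequential composition $\mathrel{;}$ and monoidal product $\otimes$. The guiding intuition, read off from Figure~\ref{fig:circsemantics}, is that conservation is a \emph{local} property of each generator which composition merely accumulates. Concretely, I would first recast the pictured equation as a claim about current wires only: using the white additive structure of $\mathsf{GAA}$, the left-hand side bundles the incoming current wires of $C$ into a single information wire carrying their total, while the right-hand side bundles the outgoing ones; the relation $\semEq$ then asserts that on every behaviour of $C$ these two totals coincide.

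For the base cases I would match each generator against Figure~\ref{fig:circsemantics}. The two-terminal elements $\resistor$, $\inductor$, $\capacitor$, $\vsource$ and $\csource$ each relate $\binom{\phi_1}{i}$ to $\binom{\phi_2}{i}$ through a \emph{single} current variable $i$, so incoming and outgoing currents are literally identified and conservation holds trivially. The load-bearing case is the junction $\junction$ (and, dually, $\cojunction$), whose defining constraint $i_1+i_2+i_3=0$ \emph{is} Kirchhoff's current law at a node and hence exactly the conservation statement for that generator. Finally, $\circcounit$ and $\circunit$ each impose $i=0$, supplying the empty-sum boundary instances.

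The inductive cases carry the diagrammatic equational reasoning. For $c\mathrel{;}d$ the outgoing current wires of $c$ are precisely the incoming wires of $d$, so chaining the two hypotheses through these shared wires gives conservation for the composite. For $c\otimes d$ one must merge the two summing gadgets into one: here I would invoke the commutative-monoid and bialgebra laws of $\IH$ for the white (adding) structure, using associativity and commutativity to reassociate the bundled sums and the bialgebra/Frobenius equations to slide the summing gadget past the juxtaposition, after which the equality $\sum i_{\mathrm{in}}(c)+\sum i_{\mathrm{in}}(d)=\sum i_{\mathrm{out}}(c)+\sum i_{\mathrm{out}}(d)$ follows from the two hypotheses.

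I expect the main obstacle to be bookkeeping rather than arithmetic: giving a clean diagrammatic definition of the current-summing gadget and proving, purely from the $\IH$ axioms, that it propagates coherently through $\mathrel{;}$ and $\otimes$, all while keeping the orientation and sign conventions of Figure~\ref{fig:circsemantics} consistent between wires bundled on the left and on the right. The delicate point is the junction, where the sign with which each branch current enters the total must be chosen so that the ``through'' convention of the two-terminal elements and the node law $i_1+i_2+i_3=0$ combine into a single coherent accounting of incoming versus outgoing current.
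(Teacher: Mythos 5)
Your proposal is correct and is essentially the paper's own proof: the paper establishes this proposition by running exactly the structural induction used for Proposition~\ref{prop:relativePotentials} with the colours inverted, i.e.\ verifying the additive conservation gadget on each generator of~\eqref{eq:circpropsig} and propagating it through sequential and parallel composition, which matches your plan down to the base cases (the single through-current elements, the junction law $i_1+i_2+i_3=0$, and the $i=0$ terminals). Your closing caveat about sign conventions at the junction is the right thing to watch, but it is absorbed by the orientation conventions of Figure~\ref{fig:circsemantics} and does not alter the argument.
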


\begin{proof}
    The proof proceeds exactly like the previous one,
    the only difference being an inversion of the colours,
    which does not affect the proof.
\end{proof}

\begin{theorem}[Representation theorem]\label{thm:representation}
    A one-port circuit $c$
    is representable by an impedance:
    \[
    \lower5pt\hbox{\includegraphics[height=.5cm]{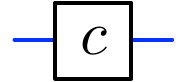}}
    \semEq
    \lower20pt\hbox{\includegraphics[height=1.6cm]{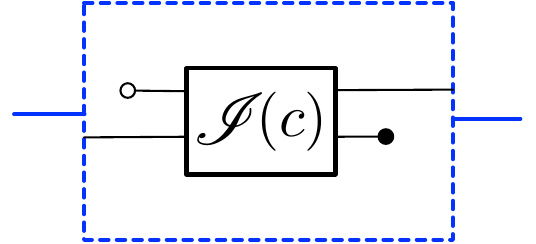}}
    \]
\end{theorem}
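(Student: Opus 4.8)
The plan is to reduce the semantics of a one-port circuit to a single affine relation between its through-current and its voltage difference, and then to observe that this relation is exactly the data an impedance box encodes. Since $c$ has type $\sort{\bluebullet}{\bluebullet}$, its translation $\semElec{c}$ is an affine relation $2\to 2$ in $\AffRel{\R(x)}$, relating a left pair $\binom{\phi_1}{i_1}$ to a right pair $\binom{\phi_2}{i_2}$.

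First I would bring in the two invariants already established. Conservation of currents (Proposition~\ref{prop:conservationCurrents}) forces $i_1=i_2$ in every element of $\semElec{c}$, so there is a single through-current $i$. Relativity of potentials (Proposition~\ref{prop:relativePotentials}) says $\semElec{c}$ is invariant under adding the same scalar to both $\phi_1$ and $\phi_2$, so membership depends on the voltages only through their difference $\phi_2-\phi_1$. Together these show that $\semElec{c}$ is completely determined by the affine subspace
\[
Z \;=\; \bigl\{\,(i,\;\phi_2-\phi_1)\;\bigm|\;\bigl(\tbinom{\phi_1}{i},\tbinom{\phi_2}{i}\bigr)\in\semElec{c}\,\bigr\}\;\subseteq\;\R(x)^2,
\]
that is, a $\sort{1}{1}$ affine relation.

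Next I would realise this data inside a box. Because $\mathsf{GAA}$ characterises $\AffRel{\R(x)}$, there is a $\mathsf{GAA}$ diagram $z\from 1\to 1$ with $\isoA{}{z}=Z$. Reading off the impedance box semantics from~\eqref{eq:impedanceBox}, the box holding $z$ identifies the two current wires and feeds the common current through $z$ to produce the voltage difference across its two electrical ports; by construction this semantics is precisely the relation recovered above from $Z$ together with the two invariants. Hence $\semElec{c}$ equals the semantics of that box, giving the claimed $c \semEq \impedanceBox$.

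The main obstacle is the diagrammatic bookkeeping connecting the two middle steps: one must route the voltage and current wires so that the impedance box's fixed ``plumbing''---merging the two currents and subtracting the two voltages---matches $\semElec{c}$ on the nose, rather than merely up to the abstract relation $Z$. The two preceding propositions do the conceptual work, so I expect this to reduce to careful wire manipulation in $\mathsf{GAA}$ plus the appeal to completeness; I would most naturally carry it out by a structural induction mirroring the proofs of Propositions~\ref{prop:relativePotentials} and~\ref{prop:conservationCurrents}, or, more slickly, by using the two invariants as rewrites to collapse $\semElec{c}$ directly into the box's normal form and then transporting $Z$ into $z$.
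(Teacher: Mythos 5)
Your proposal is correct, and its backbone---deriving the theorem from Propositions~\ref{prop:relativePotentials} and~\ref{prop:conservationCurrents}---is exactly the paper's, but the finishing move differs enough to be worth comparing. You argue semantically in $\AffRel{\R(x)}$: conservation forces a single through-current, relativity makes membership depend on the potentials only via $\phi_2-\phi_1$, you package this as a $\sort{1}{1}$ affine relation $Z$, and you appeal to fullness of the semantics of $\mathsf{GAA}$ to conjure \emph{some} diagram $z$ denoting $Z$. The paper avoids that appeal entirely: since $\semElec{c}$ is itself already a $\mathsf{GAA}$ diagram, the box content is constructed \emph{explicitly} from it by bending wires, and the verification that the resulting impedance box reproduces $c$ is carried out within the (in)equational theory, using the two propositions as rewrites (see \cref{proof:representation})---which is precisely the ``slicker'' alternative you sketch in your final paragraph. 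The paper's version buys constructivity (the impedance is read off the circuit's own translation, which matters downstream, e.g.\ in the proof of Th\'evenin's theorem, where the resulting impedance is analysed further) and stays wholly diagrammatic; yours buys a shorter argument at the price of invoking completeness/expressivity of $\mathsf{GAA}$ over $\AffRel{\R(x)}$ and leaving $z$ non-explicit---legitimate, since the paper defines $\semEq$ equivalently as equality of denoted affine relations. If you write your version up, spell out two details: reconstructing $\semElec{c}$ from $Z$ uses relativity as a two-sided closure property (translation by every $a$, negatives included), and $Z$ is indeed affine because it is the image of an affine subspace under a linear map, the empty case being affine by convention.
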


\begin{proof}
    Using the conclusions of Propositions~\ref{prop:relativePotentials} and~\ref{prop:conservationCurrents} we have
    \[
    \lower8pt\hbox{\includegraphics[height=.5cm]{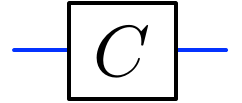}}
    \semEq
    \lower9pt\hbox{\includegraphics[height=1.2cm]{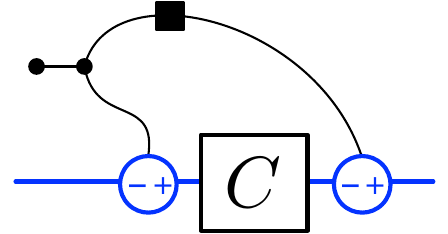}}
    \semEq
    \lower9pt\hbox{\includegraphics[height=1.5cm]{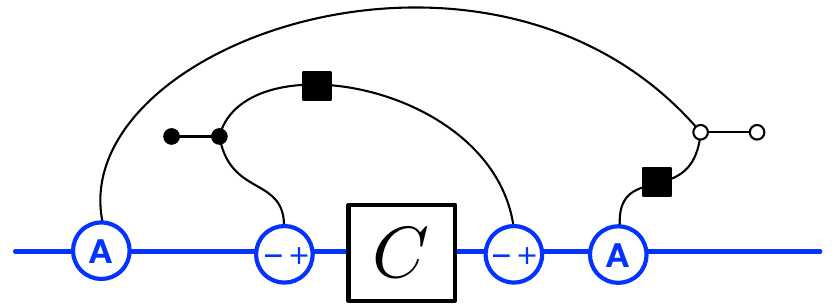}}
    \]
    A simple calculation, therefore, confirms that:
    \[
    \lower8pt\hbox{\includegraphics[height=.7cm]{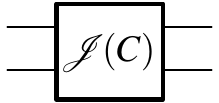}}
    =
    \lower19pt\hbox{\includegraphics[height=1.7cm]{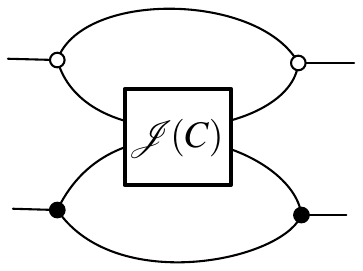}}.
    \]
    Now:
    \begin{multline*}
    \semElec{\lower15pt\hbox{\includegraphics[height=1.4cm]{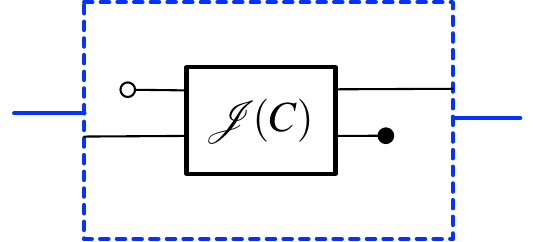}}}
    =
    \lower19pt\hbox{\includegraphics[height=1.7cm]{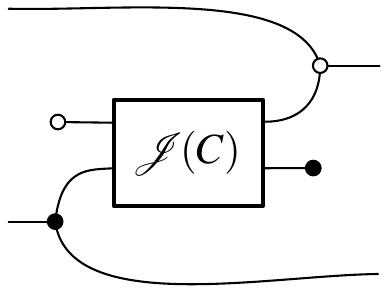}}
    =
    \lower22pt\hbox{\includegraphics[height=1.9cm]{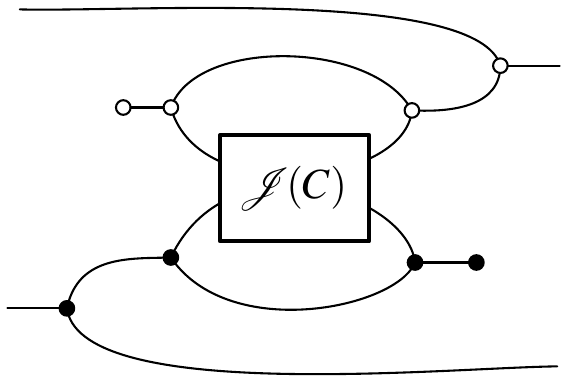}} \\
    =
    \lower30pt\hbox{\includegraphics[height=2.1cm]{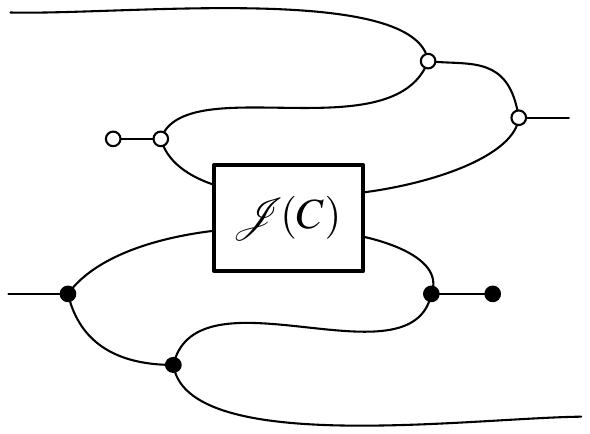}}
    =
    \lower22pt\hbox{\includegraphics[height=1.7cm]{graffles/jc2.pdf}}
    =
    \lower8pt\hbox{\includegraphics[height=.7cm]{graffles/jc1.pdf}}
    =
    \semElec{\lower4pt\hbox{\includegraphics[height=.5cm]{graffles/repcircuit1.pdf}}}.
    \end{multline*}
\end{proof}

\subsection{Expressivity}\label{sec:expressivity}

A nervous reader might ask:
given that we have liberally extended the basic calculus with
arbitrary affine relations, via the information wires,
do we risk modeling unphysical circuits?
The answer is a surprising one:
textbook elements are already expressive enough;
they can express most affine relations.

We start by observing a surprising fact:
adding \emph{only} controlled sources and meters to the basic circuit elements~\eqref{eq:circpropsig}
is \emph{already} enough to build a circuit for most affine relations.
Indeed, already we can build any $\mathsf{GAA}$ diagram on the information wires, since
every generator can be constructed with a small circuit:

\[
\Wmult \semEq \lower13pt\hbox{\includegraphics[height=1.3cm]{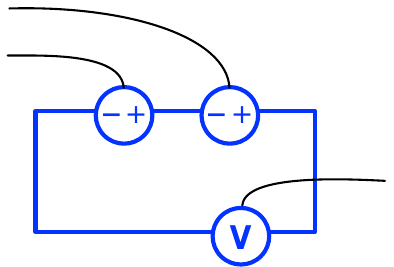}}
\quad
\Wunit \semEq \lower2pt\hbox{\includegraphics[height=.5cm]{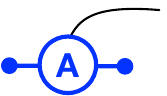}}
\quad
\Bcomult \semEq \lower13pt\hbox{\includegraphics[height=1.1cm]{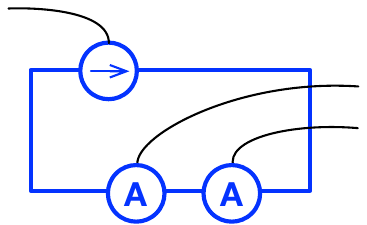}}
\quad
\Bcounit \semEq \lower2pt\hbox{\includegraphics[height=.5cm]{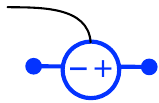}}
\]
\[
\Wcomult \semEq \lower13pt\hbox{\includegraphics[height=1.1cm]{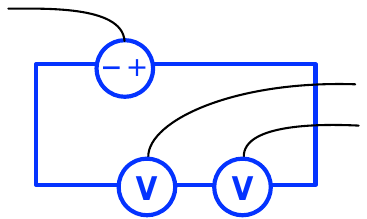}}
\quad
\Wcounit \semEq \lower2pt\hbox{\includegraphics[height=.5cm]{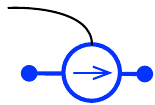}}
\quad
\Bmult \semEq \lower13pt\hbox{\includegraphics[height=1.1cm]{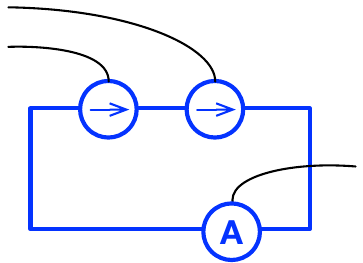}}
\quad
\Bunit \semEq \lower2pt\hbox{\includegraphics[height=.5cm]{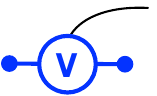}}
\]
\[
\One \semEq \lower10pt\hbox{\includegraphics[height=1.1cm]{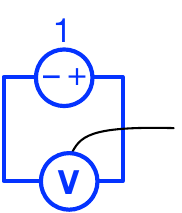}}
\quad
\antipode \semEq \lower10pt\hbox{\includegraphics[height=1.1cm]{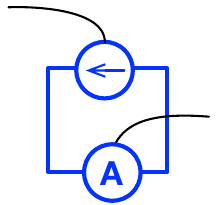}}
\quad
\lower2pt\hbox{\includegraphics[height=.25cm]{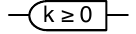}}
\semEq
\lower10pt\hbox{\includegraphics[height=1.1cm]{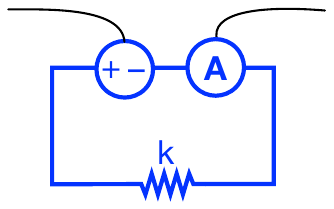}}
\quad
\circuitXop
\semEq
\lower10pt\hbox{\includegraphics[height=1.1cm]{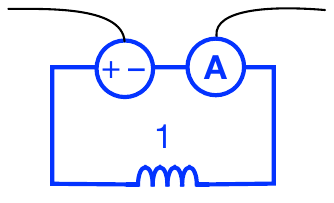}}
\quad
\circuitX
\semEq
\lower10pt\hbox{\includegraphics[height=1.1cm]{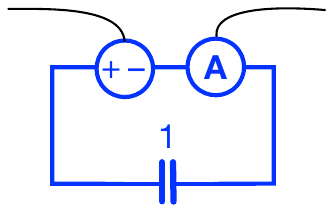}}
\]

So, restricting to just information wires, we have the full expressivity of $\mathsf{GAA}$.
The following result characterises expressivity, also taking electrical wires into account.

\begin{theorem}[Expressivity Theorem]\label{thm:expressivity}
    Using only basic circuit elements~\eqref{eq:circpropsig},
    meters and controlled sources,
    we can express all affine relations that respect the invariants of
    \cref{prop:relativePotentials,prop:conservationCurrents}.
    This includes arbitrary affine relations on the information wires.
\end{theorem}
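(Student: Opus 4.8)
The plan is to prove the theorem by exhibiting an explicit construction that realizes an arbitrary affine relation respecting the two invariants, using only basic circuit elements, meters, and controlled sources. The key observation, already established in the paragraph immediately preceding the theorem statement, is that every $\mathsf{GAA}$ generator---each of $\Wmult,\Wunit,\Bcomult,\Bcounit,\Wcomult,\Wcounit,\Bmult,\Bunit,\One$, the antipode, and the scalars $\circuitX,\circuitXop$---can be built from a small circuit using meters and controlled sources. Consequently, by functoriality of $\semElec{\cdot}$ and freeness of the prop, \emph{any} $\mathsf{GAA}$ diagram whatsoever can be assembled on the information wires alone, giving us the full expressivity of $\mathsf{GAA}$ restricted to information wires. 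This handles the final sentence of the statement immediately.

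The harder direction is the converse-flavoured claim: that \emph{every} affine relation respecting the invariants of \Cref{prop:relativePotentials,prop:conservationCurrents} arises as the denotation of some such circuit. First I would fix an arbitrary affine relation $S$ on the electrical and information wires that is invariant under adding a common voltage difference to all electrical wires (relativity of potentials) and satisfies current conservation. The strategy is to use the meters to extract, for each electrical wire, its voltage-current pair onto information wires, where we already have full $\mathsf{GAA}$ expressivity; there we can impose \emph{any} affine constraint relating these extracted quantities; and then use controlled sources to feed the constraints back onto the electrical wires. Concretely, I would attach a voltmeter and an ammeter to each electrical port to expose the $(\phi,i)$ data as information wires, realize the desired affine constraint $S$ as a $\mathsf{GAA}$ diagram on those information wires (possible by the first paragraph), and close the loop with controlled voltage and current sources.

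The main obstacle---and the reason the two invariants appear in the hypothesis rather than being eliminable---is that the meter and controlled-source semantics do not let us constrain \emph{absolute} potentials or violate current balance: a voltmeter reads only a \emph{difference} $\phi_2-\phi_1$, and the junction/cojunction structure forces Kirchhoff's current law at every node. So the construction can only ever produce relations already lying inside the invariant subspace, and one must check that within that subspace the meter-readout-plus-controlled-source scheme is \emph{surjective}. I would argue this by choosing a presentation of the invariant affine relation in terms of the independent quantities that survive the invariants---voltage \emph{differences} between ports and a current-conserving family of currents---and verifying that precisely these are the quantities the meters expose and the controlled sources can set. The remaining work is then routine: check that the composite circuit's translation to $\mathsf{GAA}$ equals the chosen presentation of $S$, using the impedance-calculus manipulations and the breaching \Cref{observation} to slide the $\mathsf{GAA}$ constraint into place, and appeal to \Cref{prop:relativePotentials,prop:conservationCurrents} to confirm no invariant-violating relation can sneak in, closing the characterisation in both directions.
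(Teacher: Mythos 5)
Your proposal takes essentially the same route as the paper's proof: the paper likewise first realizes every $\mathsf{GAA}$ generator on information wires, then encodes invariant-preservation as an equation on the relation $R$ (via permutations collecting the voltage and current wires), rewires $R$ into a relation $R'$ living purely on information wires---your ``presentation in terms of voltage differences and current-conserving currents''---builds a circuit for $R'$ from the generator constructions, and wraps it with meters and controlled sources at the electrical ports, deriving diagrammatically that the resulting semantics is $R$. The verification you defer as routine is exactly the paper's closing derivation, which uses the invariance equation to slide the port data through the meter/source gadgets, so your plan is correct and aligned with the paper's.
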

\begin{proof}
    We sketched the main idea above; details omitted for space.
\end{proof}

Now take a circuit with \emph{no open information wires}
made up of only basic elements, meters and sources.
It is important to note that, although we can construct circuits equivalent to an arbitrary affine relations on the information wires, there is no \emph{direct} way to copy information wires in a circuit.
Since they cannot be copied,
every meter is connected to exactly one source. 
This is essentially equivalent to adding textbook elements called
\textit{measurement-controlled sources}.
The voltage-controlled current source is constructed as follows;
other combinations (e.g.\ current-controlled voltage source) are similar.
\[
\lower12pt\hbox{\includegraphics[height=1.2cm]{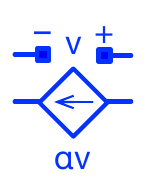}}
\quad \Defeq \quad
\lower13pt\hbox{\includegraphics[height=1.3cm]{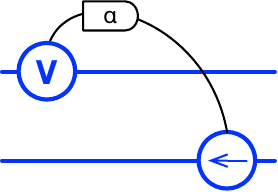}}
\]

But this also means that,
using only basic elements and measurement-controlled sources,
we can build a circuit for any relation that respects the invariants
(with no open information wires).
There is no need to fear, therefore:
textbook elements already capture enough affine relations.

\subsection{Thévenin's theorem}

Last but not least,
we prove Thévenin's theorem, 
a well-known real-life example of compositional, diagrammatic reasoning. 
It allows one to replace a one-port circuit by a simpler, equivalent one.
It can be seen both as a consequence and a stronger version of the representation theorem
(\Cref{thm:representation}).

\begin{theorem}[Thévenin's theorem]\label{thm:theveninFull}
    If $C$ is a one-port circuit made only of resistors and independent sources,
    then one of the following is true:
    \begin{enumerate}[(i)]
    \item $\lower5pt\hbox{\includegraphics[height=.5cm]{graffles/oneportcircuit.pdf}} \semEq \lower4pt\hbox{\includegraphics[height=.8cm]{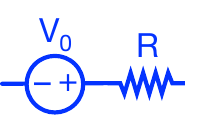}}$ for some $V_0$ and $R$,
    \item $\lower5pt\hbox{\includegraphics[height=.5cm]{graffles/oneportcircuit.pdf}} \semEq
    \lower4pt\hbox{\includegraphics[height=.8cm]{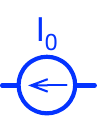}}$ for some $I_0$,
    \item $\lower5pt\hbox{\includegraphics[height=.5cm]{graffles/oneportcircuit.pdf}}$ denotes the empty relation.
    \end{enumerate}
\end{theorem}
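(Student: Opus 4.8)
The plan is to reduce the statement to a classification of the one-port's impedance as an affine relation over $\R$, and then to pin that relation down using the nonnegativity of resistances. First I would invoke the Representation Theorem (Theorem~\ref{thm:representation}) to replace $C$ by an impedance box holding a single $\sort{1}{1}$ $\mathsf{GAA}$ diagram $z$, so that $C \semEq \impedanceBox$. Because $C$ contains only resistors and independent sources — and in particular no inductors or capacitors — the scalars appearing in $z$ never involve the indeterminate $x$, so $\semElec{z}$ is an affine subspace $Z \subseteq \R^2$ relating the through-current $i$ to the voltage difference $v$ across the port. That a one-port is governed by exactly these two quantities is precisely the content of Propositions~\ref{prop:conservationCurrents} and~\ref{prop:relativePotentials}. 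The whole theorem now becomes a statement about which affine subspaces of $\R^2$ can arise: I must show that $Z$ is empty, a line $v = Ri + V_0$ with $R \ge 0$, or a vertical line $i = I_0$.

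Next I would homogenise. If $Z = \emptyset$ we are in case (iii), so assume $Z = z_0 + V$ with linear part $V$. Turning every source off — replacing each voltage source by a wire and each current source by an open circuit, exactly as in the Superposition Theorem (Theorem~\ref{thm:superposition}) — deletes the affine offset and realises $V$ as the impedance of a purely resistive one-port. The key lemma is then that the impedance of a resistor-only one-port is either $\{(i,v) : v = Ri\}$ for some $R \ge 0$ or $\{(i,v) : i = 0\}$; equivalently, $V$ is a one-dimensional subspace of nonnegative slope or the vertical axis. Granting this, translating back by $z_0$ gives $Z = \{(i,v): v = Ri + V_0\}$ in the first situation (case (i)) and $Z = \{(i,v): i = I_0\}$ in the second (case (ii)), and matching these denotations against the right-hand circuits of (i) and (ii) finishes the argument.

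The hard part is this resistor lemma, and it is genuinely where the hypotheses bite: the representation theorem by itself allows $V$ to be \emph{any} subspace of $\R^2$, including the whole plane, a negative-slope line, or the origin, none of which appear among the cases. Ruling these out requires tracking the nonnegativity of the $R$'s, which is invisible to pure linear-relational reasoning. I would establish non-triviality ($V \neq \{0\}$, which excludes a single point for $Z$) by observing that a resistive port always admits either a nonzero current or a nonzero sustained voltage, and I would exclude the whole plane and negative slopes by a passivity argument: every resistor satisfies $vi = R i^2 \ge 0$, and this dissipation inequality is inherited at the port, forcing $vi \ge 0$ for all $(i,v)\in V$. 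I expect the main obstacle to be making this positivity rigorous for arbitrary — not merely series--parallel — resistor networks, since inequalities such as $vi \ge 0$ are not first-class in the affine-relational setting. The cleanest route is an induction on circuit structure using the series and parallel combinators of \Cref{lem:impedanceProperties} together with internal-node elimination, carrying the invariant that the equivalent resistance stays in $[0,\infty]$ throughout.
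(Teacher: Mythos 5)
Your overall architecture coincides with the paper's: invoke the Representation Theorem (\Cref{thm:representation}) to reduce $C$ to a single affine impedance over $\R$, split into the empty case versus an offset plus a linear part, realise the linear part as the impedance of the circuit $C_0$ obtained by switching every source off, classify the resistor-only impedance, and translate back by the offset to recover cases (i)--(iii). The paper even isolates your ``resistor lemma'' verbatim, as a Passive Th\'evenin lemma (\Cref{thm:passive-thevenin}): a resistor-only one-port is $\semEq$ to a single resistor or to the open circuit. Your observation that the indeterminate $x$ never appears, and your handling of the translation (absorbing the offset into $V_0$, resp.\ $I_0$), are both sound and match the paper.

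The genuine gap is in your plan for proving that key lemma. You propose ``induction on circuit structure using the series and parallel combinators of \Cref{lem:impedanceProperties} together with internal-node elimination, carrying the invariant that the equivalent resistance stays in $[0,\infty]$'' --- but this does not match the syntax you must induct over. Circuits in $\circprop$ are freely generated under sequential and monoidal composition, so the sub-terms of a one-port are arbitrary $m\to n$ circuits, not one-ports: an invariant phrased as ``the equivalent resistance lies in $[0,\infty]$'' is undefined for those intermediate pieces, series--parallel decomposition does not reach all resistor networks, and star--mesh elimination is a semantic graph rewrite, not a case of structural induction, so it would itself need justification in the calculus. The paper's fix is precisely a \emph{composable} positivity invariant stated at every arity: bending wires so that all current wires are inputs and all voltage wires are outputs, one shows by induction over the prop structure that the linear relation of any resistor-only circuit is positive semidefinite, i.e.\ of the form $R^T;R$ for some relation $R$, a property stable under composition and tensor. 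Specialising to arity $1$-to-$1$ and enumerating the positive semidefinite $1$-to-$1$ linear relations then yields exactly the two cases (a line $v=Ri$ with $R\ge 0$, or the vertical axis $i=0$), excluding the whole plane, negative slopes, and the degenerate point in one stroke. Your $vi\ge 0$ passivity intuition is the correct semantic shadow of this invariant --- and you rightly flagged that inequalities are not first-class in the affine-relational setting --- but to close the induction you must package positivity as a relational property preserved by the prop operations; that is the one idea your proposal is missing.
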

\begin{proof}
    Omitted for space.
\end{proof}

\section{Conclusion and future work}

We extended existing compositional treatments of electrical circuit theory by including meters and controlled sources, enabling the analysis of closed circuits. We introduced the impedance calculus and demonstrated its power by solving simple circuits, and by proving a number of results---with the superposition theorem and Th\'evenin's theorem particularly notable, well-known examples from the corpus of standard electrical circuit theory. We also proved a representation theorem and characterised the expressivity of our circuit language, which is exactly that of circuits built up of standard textbook elements. With the syntactic approach, proofs are pleasingly simple, involving simple invariants and induction.

\medskip
It is worth stepping back to consider the bigger picture. We believe that the results herein are evidence that, while it is uncontroversial to say that electrical circuits are relational entities, they are better understood using relational \emph{mathematics}. The benefit of not turning everything into matrices is that theorems are more crisp to state, the relationships and symmetries between various results become more immediately apparent, annoying ``degenerate'' cases are not ignored, and proofs become more satisfying.

There is much future work: while we could not include all of the theorems, we are confident of being able to replicate the majority of standard electrical circuit theory using these techniques. An obvious open question is how far can this approach take us? The passage from Graphical Linear Algebra to Graphical Affine Algebra allows to pass from passive circuits to non-passive circuits. The next, obvious challenge is to go beyond linearity: for example, if we were  to satisfactorily capture the diode, we would get transistors, enabling the compositional analysis of arbitrary electronic circuits.

\bibliographystyle{eptcs}
\bibliography{main}

\end{document}